\documentclass[lettersize,journal]{IEEEtran}
\usepackage{amsmath,amsfonts}
\usepackage{mathtools} 
\usepackage{algorithmic}
\usepackage{algorithm}
\usepackage{array}
\usepackage[caption=false,font=normalsize,labelfont=sf,textfont=sf]{subfig}
\usepackage{textcomp}
\usepackage{stfloats}
\usepackage{url}
\usepackage{verbatim}
\usepackage{graphicx}
\usepackage{cite}
\usepackage{bm}
\usepackage{amsthm}
\usepackage{threeparttable}
\usepackage{multirow}
\usepackage{float}
\usepackage{color}
\newtheorem{lemma}{Lemma}
\newtheorem{theorem}{Theorem}
\begin{document}

\title{Uplink Signal Detection For Large-Scale MIMO-ISAC Systems}

\author{Jian Wang, Qiqiang Chen, Zheng Wang,~\IEEEmembership{Senior Member,~IEEE}, Fan Liu,~\IEEEmembership{Senior Member,~IEEE},\\Yili Xia,~\IEEEmembership{Member,~IEEE}, Yongming Huang,~\IEEEmembership{Fellow,~IEEE} and Chau Yuen,~\IEEEmembership{Fellow,~IEEE}
\thanks{Jian Wang, Qiqiang Chen, Zheng Wang, Liu Fan, Yili Xia, and Yongming Huang are with School of Information Science and Engineering, Southeast University, Nanjing 210096, China (e-mail: wjianzzu@gmail.com; q.chen@seu.edu.cn; wznuaa@gmail.com; f.liu@ieee.org; yilixia@seu.edu.cn; huangym@seu.edu.cn).

Chau Yuen is with the School of Electrical and Electronics Engineering, Nanyang Technological University, Singapore (e-mail: chau.yuen@ntu.edu.sg)
}
}

\maketitle

\begin{abstract}
Next-generation wireless communication systems are unifying large-scale multiple-input multiple-output (MIMO) and integrated sensing and communication (ISAC) to enhance sensing and communication performance.
In this paper, the signal detection problem for MIMO-ISAC systems is modeled as a mixed-integer least squares (MILS) problem.
To solve it efficiently, we propose a projection-based neighborhood search-aided alternating direction method of multipliers (P-NS-ADMM) detection scheme.
By theoretical analysis, we demonstrate that P-NS-ADMM achieves the same received diversity order as maximum likelihood (ML) detection.
For further complexity reduction, an iteration-based NS-ADMM (I-NS-ADMM) is proposed to remove the complex projection operation.
Complexity analysis shows its complexity advantage compared with P-NS-ADMM.
Moreover, to better estimate the sensing signals for I-NS-ADMM, a flexible mechanism of ADMM iterations is given.
Finally, simulations demonstrate the proposed NS-aided ADMM detection schemes have significant performance advantages in terms of both BER and NMSE.
\end{abstract}

\begin{IEEEkeywords}
Large-scale MIMO-ISAC detection, integrated sensing and communication, ADMM, neighborhood search.
\end{IEEEkeywords}

\section{Introduction}

Recently, as a promising technology for next-generation wireless networks, large-scale multiple-input multiple-output (MIMO)-integrated sensing and communication (ISAC) has received extensive attentions from both industry and academia \cite{liu2022survey, 10012421, 9585321}.
On one hand, large-scale antenna arrays can provide high spatial resolution to enhance the sensing performance.
On the other hand, communication networks can utilize sensing techniques for beamforming design and channel estimation for the communication performance improvement \cite{liu2022integrated}.
Despite these advantages, current research predominantly focuses on transmitter-side signal processing, particularly beamforming \cite{liu2021cramer, liu2020joint, liu2022learning, wang2022exploiting} and waveform design \cite{liu2021dual, 9724187, 10771629}, with uplink signal detection remaining an important open question.

The receiver design in MIMO-ISAC systems is inherently more complex than that in radar-communication coexistence (RCC) architectures. 
In RCC systems, sensing and communication (S\&C) functions operate on separate platforms, which simplifies the design challenge to either suppress sensing interference for communication or mitigate communication interference for sensing operations \cite{zheng2017adaptive}.
In contrast, MIMO-ISAC receivers need to perform uplink communication signal detection and target parameter estimation simultaneously.
This requirement leads to significant mutual interference between S\&C signals, thus resulting in substantial challenges in receiver design.

Recently, several contributions have been made to the uplink MIMO-ISAC systems \cite{ouyang2022performance, liu2020road, temiz2021dual, dong2023joint, yu2024addressing}.
Specifically, the authors of \cite{ouyang2022performance} demonstrated that MIMO-ISAC receivers provide superior degrees of freedom for both S\&C compared to conventional frequency-division systems.
\cite{liu2020road} assumed that the sensing signals are received prior to the uplink communication signals.
This enables the prior estimation of sensing signal parameters, thereby facilitating the problem formulation.
Meanwhile, several successive interference cancellation (SIC)-based receiver designs have been proposed to mitigate mutual interference between S\&C signals in \cite{ouyang2022performance}, \cite{temiz2021dual}.
However, these SIC-based schemes not only suffer from the error propagation but also impede the derivation of the residual signal's distribution, hindering the optimal sensing estimation design.
The work in \cite{dong2023joint} proved that the SIC-based scheme is sub-optimal and formulated a tailored minimum mean squared error (MMSE) estimator for target estimation.
However, it only concerns single-user and single-target scenarios.
In \cite{yu2024addressing}, the authors proposed a projection-based scheme and applied semidefinite relaxation (SDR) for communication signal detection.
Unfortunately, the severe performance degradation of SDR in higher-order modulations (e.g., $\geq$ 16-QAM) substantially restricts its practical deployment.

Recent research has demonstrated the effectiveness of the alternating direction method of multipliers (ADMM) in solving both convex and non-convex optimization problems, owing to its simplicity and guaranteed convergence \cite{boyd2011distributed}.
The application of ADMM to MIMO detection was first proposed by Takapoui et al. \cite{takapoui2020simple}, leading to subsequent developments including ADMIN \cite{shahabuddin2021admm} and PS-ADMM \cite{zhang2022efficient} for large-scale MIMO systems.
However, these ADMM-based approaches inherently suffer from performance degradation as they have to relax the original discrete integer problem into a continuous domain.
On the other hand, the neighborhood search (NS) algorithms have emerged as an effective way to provide significant gains over the linear MMSE detection with low complexity cost \cite{sah2017sequential}.
Among these, likelihood ascent search (LAS) \cite{vardhan2008low} is the first reported algorithm that starts with an initial estimate and then searches for the optimal solution in the neighboring set by minimizing the maximum likelihood (ML) cost.
Based on it, techniques such as  multistage LAS (MLAS) \cite{mohammed2008low}, reactive tabu search (RTS) \cite{srinidhi2009low}, and unconstrained LAS (ULAS) \cite{sah2017unconstrained} have been given for better BER performance.
Despite these advances, theoretical analysis of the NS-based algorithms is essentially limited due to their uncertain number of iterations.
Recent work \cite{chen2025decentralized} addressed this limitation through the decentralized LAS (DLAS) mechanism, which applies NS to iterative algorithms such as ADMM and provably achieves full received diversity order.

In this paper, by exploiting ideas of the DLAS mechanism, we propose two NS-aided ADMM signal detectors for large-scale MIMO-ISAC systems.
Both of them demonstrate favorable BER and NMSE performance.
In summary, the main contributions of this paper are as follows:
\begin{itemize}
\item First of all, to address the mixed-integer least squares (MILS) problem in uplink MIMO-ISAC systems, we propose a projection-based NS-aided ADMM (P-NS-ADMM) scheme. 
It analytically eliminates continuous sensing variables via orthogonal projection, transforming the complex MILS problem into a lower-dimensional integer least squares (ILS) problem for efficient solving.
\item Secondly, we provide a rigorous theoretical analysis for the proposed P-NS-ADMM. 
We prove that it achieves the same received diversity order as the optimal ML detection, guaranteeing its reliability in high SNR regimes.
\item Thirdly, to reduce the high computational cost of projection, we propose a low-complexity iteration-based NS-aided ADMM (I-NS-ADMM) scheme that solves the MILS problem directly.
A flexible ADMM iteration mechanism is introduced to refine the update of continuous sensing variables, ensuring superior estimation accuracy.
\end{itemize}

The rest of this paper is organized as follows.
Section II introduces the sensing and communication signal model and presents the basic framework of ADMM for MIMO detection.
In Section III, the detection problem for the MIMO-ISAC system is formulated.
In Section IV, the P-NS-ADMM scheme is proposed.
Then, its received diversity order is derived in section V.
In Section VI, a low-complexity scheme, I-NS-ADMM, is proposed, along with a flexible mechanism for ADMM iterations that enhances sensing estimation.
Section VII shows simulations of the proposed NS-aided detection for uplink large-scale MIMO-ISAC systems. 
Finally, Section VIII concludes the paper.

\emph{Notation:} Matrices and column vectors are denoted by upper and lowercase boldface letters, and the transpose, inverse of a matrix $\mathbf{B}$ by $\mathbf{B}^T$ and $\mathbf{B}^{-1}$, respectively.
We use $\mathbf{b}_i$ for the $i$-th column of the matrix $\mathbf{B}$, $b_{i, j}$ for the entry in the $i$-th row and $j$-th column of the matrix $\mathbf{B}$.
$\odot$ represents the Hadamard product, which performs element-wise multiplication between two vectors.
Additionally, $\mathrm{Tr}(\cdot)$, $E[\cdot]$, and $Var[\cdot]$ denote the trace of the matrix, expectation, and variance.
$\text{diag}(\mathbf{B})$ extracts the diagonal elements of the square matrix $\mathbf{B}$, and $\lceil \rfloor$ rounds to the closest integer.
$\mathbf{B} \succeq 0$ denotes the matrix $\mathbf{B}$ is positive semi-definite.
$\Re{(\cdot)}$ and $\Im{(\cdot)}$ indicate the real and imaginary components, respectively.
Finally, the superscript $(\cdot)^k$ denotes the iteration index.

\section{Preliminary}

This section presents the sensing and communication signal models in the MIMO-ISAC system, along with the basic ADMM framework for subsequent algorithm design.

\subsection{System Model}
\label{sec2}

As depicted in Fig.~\ref{scenario}, we consider a MIMO-ISAC system, which consists of $M_t$ single antenna sensing targets, $U$ single antenna communication user equipment (UE), and a MIMO-ISAC base station (BS) equipped with $N_t$ transmit antennas and $N_r$ receive antennas.
Assuming perfect self-interference cancellation in full-duplex operation, we focus on handling the mutual interference.

To begin with, we denote the $M_t$ ISAC downlink streams at time $t$ as $\mathbf{s}_{0}(t) = [s_1(t), \ldots, s_{M_t}(t)]^T \in \mathbb{C}^{M_t}$.
\begin{figure}[t]
	\centering
	\includegraphics[width=0.44\textwidth]{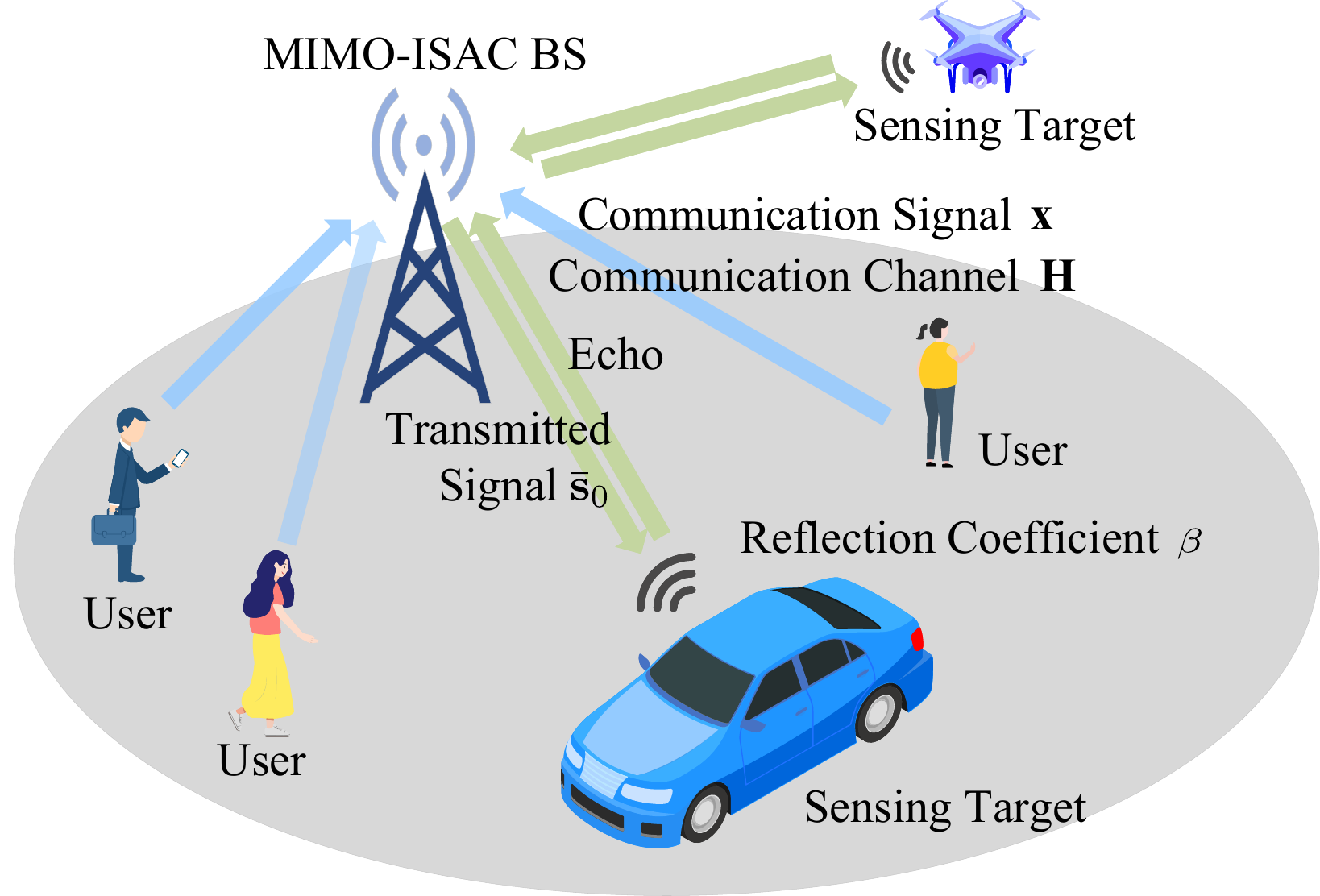}
	\caption{The uplink signal detection in MIMO-ISAC systems, where BS receives communication and sensing signals simultaneously.}
	\vspace{-0.4cm}
	\label{scenario}
\end{figure}
The transmitted signal can be expressed as
\begin{equation}
	\bar{\mathbf{s}}_0(t) = \mathbf{Fs}_{0}(t) \in \mathbb{C}^{N_t}
\end{equation}
with $\mathbf{F} \in \mathbb{C}^{N_t \times M_t}$ denoting the transmit beamforming matrix.
Accordingly, the reflected echoes received at the BS is given by
\begin{equation}
	\bar{\mathbf{y}}_s(t) = \sum_{m = 1}^{M_t} \beta_m \mathbf{a}_r(\theta_m) \mathbf{a}_t^H(\theta_m) \bar{\mathbf{s}}_0(t) + \bar{\mathbf{n}}_s(t),
	\label{rsensing}
\end{equation}
where $\beta_m$ is the reflection coefficient of the $m$-th target, and $\theta_m$ denotes the angle between the $m$-th target and the BS.
$ \bar{\mathbf{n}}_s$ represents the additive white Gaussian noise (AWGN) with zero mean and covariance matrix $\sigma^2_{s} \mathbf{I}_{N_r}$.
$\mathbf{a}_t(\theta_m)$ and $\mathbf{a}_r(\theta_m)$ are transmit and receive steering vectors, respectively, of the BS antenna array, taking the form
\begin{equation}
	\mathbf{a}_t(\theta) = \sqrt{\frac{1}{N_t}} \left[ 1, e^{-j \pi cos \theta}, \ldots, e^{-j \pi (N_t-1)cos \theta } \right]^T,
	\label{steert}
\end{equation}
\begin{equation}
	\mathbf{a}_r(\theta) = \sqrt{\frac{1}{N_r}} \left[ 1, e^{-j \pi cos \theta}, \ldots, e^{-j \pi (N_r-1)cos \theta } \right]^T,
	\label{steerr}
\end{equation}
where the half-wavelength antenna spacing for the uniform linear array (ULA) is employed.
Based on the prediction of the angles, the $m$-th column of $\mathbf{F}$ is given as
\begin{equation}
	\mathbf{f}_m = \mathbf{a}_t \left( \hat{\theta}_m\right), \ \forall m.
	\label{beamforming}
\end{equation}
Given the fact that we focus our attention on the estimation of reflection coefficient $\beta_m$, we assume the angle of the target is perfectly tracked \cite{dong2023joint}, so as to $\mathbf{f}_m = \mathbf{a}_t (\theta _m)$.

Exploiting the asymptotic orthogonality of ULA steering vectors in the massive MIMO systems ($N_r \to \infty$) \cite{ngo2015massive}, i.e., $|\mathbf{a}_r^H(\theta)\mathbf{a}_r(\phi)| \rightarrow 0$ for $\theta \neq \phi$, the inter-beam interference can be neglected.
Consequently, the reflected signal for $M_t$ targets can be rewritten as
\begin{align}
	\bar{\mathbf{y}}_s(t) &= \sum_{m = 1}^{M_t} \beta_m \mathbf{a}_r(\theta_m) \mathbf{a}_t^H(\theta_m) \mathbf{f}_m s_m(t) + \bar{\mathbf{n}}_s(t) \nonumber \\
	&=\bar{\mathbf{A}}\bar{\mathbf{s}}(t) + \bar{\mathbf{n}}_s(t),
	\label{rsensing2}
\end{align}
where ${\bar{\bf A}} = \left[ {{{\bf{a}}_r}({\theta _1}), \ldots ,{{\bf{a}}_r}({\theta _{M_t}})} \right] \in {\mathbb{C}^{{N_r} \times M_t}}$ denotes a matrix consisting of the receive steering vectors and  $\bar{\mathbf{s}}(t) = [\beta_1 s_1(t), \ldots, \beta_{M_t} s_{M_t}(t)]^T \in \mathbb{C}^{M_t}$ consists of downlink streams and reflection coefficient.
In subsequent analysis, we refer to $\bar{\mathbf{s}}(t)$ as the sensing signal.

For communication signals, let ${\bar{\bf x}}(t) \in {{\mathcal{O}}^U}$ denote the complex-valued uplink communication signals.
Then, the corresponding received communication signal vector ${{\bar{\bf y}}_c}(t) \in {\mathbb{C}^{{N_r}}}$ at BS is given by
\begin{equation}
	{{\bar{\bf y}}_c}(t) = {\bar{\bf H}}{\bar{\bf x}}(t) + {\bar{\bf n}_{c}}(t).
	\label{comproblem}
\end{equation}
Here, $\bar{\mathbf{x}}(t) \in \mathcal{O}^{U}$ represents the transmitted vector from the discrete complex $L$-quadrature amplitude modulation (QAM) constellation set $\mathcal{O}^{U}$, ${\bar{\bf H}} \in {\mathbb{C}^{{N_r} \times U}}$ is the Rayleigh fading channel matrix whose entries follow $\mathcal{CN}(0, 1)$, and ${\bar{\bf n}_{c}}(t) \in {\mathbb{C}^{{N_r}}}$ denotes AWGN with zero mean and covariance matrix $\sigma^2_{c} \mathbf{I}_{N_r}$.

\subsection{ADMM-based Algorithm Framework}
ADMM is a widely used numerical method for solving convex optimization problems by decomposing them into smaller sub-problems \cite{boyd2011distributed}.
Consider the standard optimization problem formulation:
\begin{equation}
\min_{\mathbf{x}, \mathbf{z}} f(\mathbf{x}) + g(\mathbf{z}) \quad \text{s.t. } \mathbf{x} = \mathbf{z},
\label{admmproblem}
\end{equation}
where $f(\mathbf{x})$ is the primary objective function, and $g(\mathbf{z})$ acts as the indicator function for a convex set $\mathcal{C}$:
\begin{equation}
g(\mathbf{z}) = \begin{cases} 0, & \text{if } \mathbf{z} \in \mathcal{C}, \\ \infty, & \text{otherwise.} \end{cases}
\end{equation}
Through the introduction of  Lagrange multiplier $\bm{\lambda}$ and penalty parameter $\rho > 0$, the constraint is incorporated into the objective function, leading to the following scaled augmented Lagrangian function
\begin{equation}
	\mathcal{L}(\mathbf{x}, \mathbf{z}, \bm{\lambda}) = f(\mathbf{x}) + g(\mathbf{z}) + \frac{\rho}{2}\|\mathbf{x}-\mathbf{z} + \bm{\lambda}\|^2.
	\label{augla}
\end{equation}
The ADMM algorithm minimizes \eqref{augla} by updating the variables sequentially:
\begin{subequations}
	\label{equopt}
	\begin{equation}
	\mathbf{x}^{k} = \arg \min_{\mathbf{x}} f(\mathbf{x}) + \frac{\rho}{2}\|\mathbf{x} - \mathbf{z}^{k-1} + \boldsymbol{\lambda}^{k-1}\|^2,
		\label{admmx}
	\end{equation}
	\begin{equation}
\mathbf{z}^{k} = \arg \min_{\mathbf{z}} g(\mathbf{z}) + \frac{\rho}{2}\|\mathbf{x}^{k} - \mathbf{z} + \boldsymbol{\lambda}^{k-1}\|^2,
		\label{admmzz}
	\end{equation}
	\begin{equation}
\boldsymbol{\lambda}^{k} = \boldsymbol{\lambda}^{k-1} + \mathbf{x}^{k} - \mathbf{z}^{k},
		\label{admmlambda}
	\end{equation}
\end{subequations}
For the indicator function $g(\mathbf{z})$, the $\mathbf{z}$-update reduces to the Euclidean projection onto $\mathcal{C}$, i.e., $\mathbf{z}^{k} = \Pi_{\mathcal{C}}(\mathbf{x}^{k} + \boldsymbol{\lambda}^{k-1})$.
Iterating these steps until convergence yields the solution to \eqref{admmproblem}.

\section{Problem Formulation}

In this section, we formulate the detection problem for the MIMO-ISAC systems.
Specifically, the received signal $\bar{\mathbf{y}}(t)$ is the superposition of the sensing component $\bar{\mathbf{y}}_s(t)$ and the communication component $\bar{\mathbf{y}}_c(t)$, expressed as
\begin{equation}
	\bar{\mathbf{y}}(t) = {\bar{\bf y}}_s(t) + {\bar{\bf y}}_c(t) = \bar{\mathbf{A}}\bar{\mathbf{s}}(t) + {\bar{\bf H}}{\bar{\bf x}}(t) +  \bar{\mathbf{n}}(t),
	\label{cy}
\end{equation}
where $\bar{\mathbf{n}}(t) = \bar{\mathbf{n}}_s(t) + \bar{\mathbf{n}}_c(t) \in {\mathbb{C}^{{N_r}}}$  denotes AWGN with zero mean and covariance matrix $\sigma^2_{\bar{n}} \mathbf{I}_{N_r}$.
Since our analysis focuses on a specific time instant, we simplify the notation by omitting the time index $t$ in subsequent derivations.
Based on this model, the receiver is tasked with jointly recovering the discrete communication vector $\bar{\mathbf x}$ and estimating the continuous sensing vector $\bar{\mathbf s}$ to determine the reflection coefficient $\beta_m$, assuming perfect knowledge of the communication channel $\bar{\mathbf H}$ and the sensing steering matrix $\bar{\mathbf{A}}$.
Additionally, we assume that $\bar{\mathbf{x}}$ and $\bar{\mathbf{s}}$ are statistically independent, where $\bar{\mathbf{x}}$ is uniformly distributed over the constellation set and $\bar{\mathbf{s}}$ follows a uniform prior.

Theoretically, the optimal decision criterion for minimizing error probability under Bayesian inference is the MAP criterion \cite{albreem2019massive}.
By applying Bayes’ theorem, we obtain:
\begin{equation}
	P(\bar{\mathbf{x}}, \bar{\mathbf{s}} | \bar{\mathbf{y}}) = \frac{P(\bar{\mathbf{y}} | \bar{\mathbf{x}}, \bar{\mathbf{s}}) P(\bar{\mathbf{x}})P(\bar{\mathbf{s}}) }{P({\bar{\mathbf{y}}})}.
\end{equation}
Given the uniform assumptions stated above, the prior probabilities $P(\bar{\mathbf{x}})$ and $P(\bar{\mathbf{s}})$ are constants.
Therefore, the MAP criterion reduces to maximizing the likelihood function $P(\bar{\mathbf{y}} | \bar{\mathbf{x}}, \bar{\mathbf{s}})$.
Given the additive white Gaussian noise $\bar{\mathbf{n}} \sim \mathcal{CN}(\mathbf{0}, \sigma_{\bar{n}}^2\mathbf{I}_{N_r})$, the conditional probability density function (PDF) becomes
\begin{equation}
	P(\bar{\mathbf{y}} | \bar{\mathbf{x}}, \bar{\mathbf{s}}) = \frac{1}{(\pi \sigma_{\bar{n}}^2)^{N_r}} e^{-||\bar{\mathbf{y}} - \bar{\mathbf{A}}\bar{\mathbf{s}} - \bar{\mathbf{H}}\bar{\mathbf{x}}||^2}.
\end{equation}
Consequently, maximizing $P(\bar{\mathbf{y}} | \bar{\mathbf{x}}, \bar{\mathbf{s}})$ is equivalent to minimizing $\|\bar{\mathbf{y}} - \bar{\mathbf{A}}\bar{\mathbf{s}} - \bar{\mathbf{H}}\bar{\mathbf{x}}\|^2$, which leads to the following signal detection problem in MIMO-ISAC systems
\begin{equation}
	(\hat{\mathbf{x}}, \hat{\mathbf{s}}) = \underset {\bar{\mathbf{x}} \in \mathcal{O}^U, \bar{\mathbf{s}} \in \mathbb{C}^{M_t}} {\text{arg  
			min}} \left\| \bar{\mathbf{y}} - \bar{\mathbf{A}}\bar{\mathbf{s}} - \bar{\mathbf{H}}\bar{\mathbf{x}}\right\|^2.
	\label{CMILS}
\end{equation}
Here, to facilitate the subsequent analysis, we express (\ref{cy}) as its real-valued counterpart
\begin{equation}
	\mathbf{y} = \mathbf{As} + \mathbf{Hx} + \mathbf{n},
	\label{original}
\end{equation}
where
\begin{align}
	\mathbf{H}=\left[\hspace{-.3em}
	\begin{array}{cc}
		\Re(\bar{\mathbf{H}}) & -\Im(\bar{\mathbf{H}}) \\ \Im(\bar{\mathbf{H}}) & \Re(\bar{\mathbf{H}})
	\end{array}
	\right]\hspace{-.3em},
	\mathbf{A}=
	\left[\hspace{-.3em}
	\begin{array}{cc}
		\Re(\bar{\mathbf{A}}) & -\Im(\bar{\mathbf{A}}) \\ \Im(\bar{\mathbf{A}}) & \Re(\bar{\mathbf{A}})
	\end{array}
	\right]\hspace{-.3em},
	\label{ctor}
\end{align}
$\mathbf{y}=\left[\Re(\bar{\mathbf{y}}) ; \Im(\bar{\mathbf{y}})\right]$,
$\mathbf{x}=\left[\Re(\bar{\mathbf{x}}) ; \Im(\bar{\mathbf{x}})\right]$,
$\mathbf{s}=\left[\Re(\bar{\mathbf{s}}) ; \Im(\bar{\mathbf{s}})\right]$, and $\mathbf{n}=\left[\Re(\bar{\mathbf{n}}) ; \Im(\bar{\mathbf{n}})\right]$.
Then, the problem of MIMO-ISAC detection in (\ref{CMILS}) becomes
\begin{equation}
	(\hat{\mathbf{x}}, \hat{\mathbf{s}}) = \underset {\mathbf{x} \in \mathcal{X}^{K}, \mathbf{s} \in \mathbb{R}^{M}} {\text{arg  
			min}} \left\| \mathbf{y} - \mathbf{As} - \mathbf{Hx} \right\|^2.
	\label{MILS}
\end{equation}
For simplicity of notation, from this point onward, let $K=2U$, $N=2N_r$ and $M=2M_t$.
In this way, the complex constellation $\mathcal{O}^U$ is transformed into a real-valued $\sqrt{L}$-amplitude-shift keying (ASK) constellation set $\mathcal{X}^{K}$, defined as $\mathcal{X} = \{ \pm1, \pm3, \ldots, \pm(\sqrt{L} - 1)\}$.

Since $\mathbf{x}$ is constellation points constrained by a discrete state space, and $\mathbf{s}$ resides in continuous real-valued space, the detection problem in (\ref{MILS}) is essentially a mixed-integer least squares (MILS) problem \cite{chang2007miles}, which belongs to a classic non-convex optimization problem.

\section{The Proposed Projection-based NS-aided ADMM Dection Scheme}

To solve the problem in \eqref{MILS}, inspired by \cite{yu2024addressing}, \cite{wang2021new}, the problem can be equivalently translated into the following form:
\begin{equation}
	(\hat{\mathbf{x}}, \hat{\mathbf{s}}) =\! \underset {\mathbf{x} \in \mathcal{X}^{K}, \mathbf{s} \in \mathbb{R}^{M}} {\text{arg  
			min}}\!\! (\mathbf{s} - \hat{\mathbf{s}}(\mathbf{x}))^T \mathbf{\Xi}^{-1} (\mathbf{s} -\hat{\mathbf{s}}(\mathbf{x})) + \left\| \mathbf{P}(\mathbf{y} - \mathbf{Hx}) \right\|^2 \!,
	\label{pro}
\end{equation}
where $\mathbf{\Xi} = (\mathbf{A}^T\mathbf{A})^{-1}$, $ \mathbf{P} = \mathbf{I}_N - \mathbf{A}(\mathbf{A}^T\mathbf{A})^{-1}\mathbf{A}^T$, and $\hat{\mathbf{s}}$ is a function of $\mathbf{x}$.
Here, $\mathbf{P}$ functions as an orthogonal projection operator onto the null space of $\mathbf{A}$. 
By satisfying the property $\mathbf{P}\mathbf{A} = \mathbf{0}$, this projection removes the sensing interference component $\mathbf{A}\mathbf{s}$.
Since $\mathbf{\Xi } \succeq 0$, the first term in \eqref{pro} is nonnegative with minimum zero.
Meanwhile, given that the second term is independent of $\mathbf{s}$, the problem in \eqref{pro} is equivalent to minimizing the second term, which can be rewritten as
\begin{equation}
	\hat{\mathbf{x}} = \underset {\mathbf{x} \in \mathcal{X}^{K}} {\text{ min}} \| \tilde{\mathbf{y}} - \tilde{\mathbf{H}}\mathbf{x} \|^2
	\label{comdetection}
\end{equation}
with $\tilde{\mathbf{H}} = \mathbf{PH}$ and $\tilde{\mathbf{y}} = \mathbf{Py}$.
Consequently, one solution for solving \eqref{comdetection} is to firstly obtain $\hat{\mathbf{x}}$, and then substitute $\hat{\mathbf{x}}$ into \eqref{MILS} to get $\hat{\mathbf{s}}$.

However, despite the dimensionality reduction via projection, the problem in \eqref{comdetection} remains non-convex due to the persistent discrete constraints on $\mathcal{X}^K$.
Theoretically, the optimum detector for \eqref{comdetection} is the ML detector, which can achieve full received diversity order \cite{4217646}.
The received diversity order serves as a reliability metric that characterizes the decaying rate of transmission error probability with increasing SNR, which can be expressed as
\begin{equation}
	d = \lim_{\sigma_c^2 \to 0}\frac{\mathrm{log}\left(\bar{P}_{e}\right)}{\mathrm{log}(\sigma_c^2)}=\lim_{\rho \to \infty}-\frac{\mathrm{log}\left(\bar{P}_{e}\right)}{\mathrm{log} \rho},
\end{equation}
where $\bar{P}_{e}$ denotes the pair-wise error probability (PEP) in transmission and $\rho=1/\sigma_c^2$ denotes the signal-to-noise ratio (SNR).

\subsection{Non-convex Relaxation for ADMM}

To solve the non-convex problem in \eqref{comdetection}, we first relax the constraints $\mathcal{X}^{K} = \{ \pm1, \pm3, \ldots, \pm(\sqrt{L} - 1)\}^{K}$ to box constrains $[- \sqrt{L} + 1, \sqrt{L} - 1]^{K}$.
The problem in (\ref{comdetection}) becomes
\begin{align}
	\underset {\mathbf{x} \in \mathbb{R}^{K}, \mathbf{z} \in \mathcal{R}^{K}_{\mathcal{X}}} {\text{ min}} \quad & \| \tilde{\mathbf{y}} - \tilde{\mathbf{H}}\mathbf{x} \|^2, \nonumber \\
	\text{s.t.} \quad & \mathbf{x} = \mathbf{z},
	\label{relax}
\end{align}
where $\mathcal{R}_{\mathcal{X}} \in [- \sqrt{L} + 1, \sqrt{L} - 1]$ is the continuous interval.
Subsequently, the scaled augmented \emph{Lagrangian} of (\ref{relax}) can be expressed as
\begin{equation}
	\mathcal{L} (  \mathbf{x}, \mathbf{z}, \bm{\lambda}  ) = \frac{1}{2} \|  \tilde{\mathbf{y}} - \tilde{\mathbf{H}}\mathbf{x}  \|^2 + \frac{\rho}{2} \|  \mathbf{x} - \mathbf{z} + \bm{\lambda}\|^2,
	\label{lagrangian}
\end{equation}
where ${\bm{\lambda }} \in {\mathbb{R}^{K}}$ and $\rho > 0$ are the Lagrangian multiplier and penalty parameter, respectively.
Similar to \eqref{equopt}, the update of the parameter are as follows
\begin{subequations}
\begin{equation}
	\label{xopt}
	{\mathbf{x}}^{k}  = \underset {\mathbf{x} \in \mathbb{R}^{K}} {\text{arg  min}}\; \frac{1}{2} \|  \tilde{\mathbf{y}} - \tilde{\mathbf{H}}\mathbf{x}  \|^2 + \frac{\rho}{2} \|  \mathbf{x} - \mathbf{z}^{k-1} + \bm{\lambda}^{k-1} \|^2,
\end{equation}
\begin{equation}
	\label{zopt}
	{\mathbf{z}}^{k}  = \underset {{\mathbf{z}} \in \mathcal{R}^{K}_{\mathcal{X}}} {\text{arg  min}} \; \frac{\rho}{2} \|  \mathbf{x}^{k} - \mathbf{z} + \bm{\lambda}^{k-1} \|^2,
\end{equation}
\begin{equation}
	\label{lamopt}
	\bm{\lambda}^{k} = \bm{\lambda}^{k-1} + \mathbf{x}^{k} - \mathbf{z}^{k}.
\end{equation}
\end{subequations}
Specifically, since the objective function in (\ref{xopt}) is a convex quadratic function with respect to ${\mathbf{x}}$, we can set the gradient of the corresponding augmented Lagrangian function about ${\mathbf{x}}$ to be zero.
Therefore, we have
\begin{equation}
	\label{xupdate}
	\mathbf{x}^{k} = (\tilde{\mathbf{H}}^T\tilde{\mathbf{H}} + \rho\mathbf{I})^{-1} \left( \tilde{\mathbf{H}}^T\tilde{\mathbf{y}} + \rho(\mathbf{z}^{k-1} - \bm{\lambda}^{k-1})\right).
\end{equation}
Moreover, since the objective function in (\ref{zopt}) is convex regarding to $\mathbf{z}$, by setting the gradient about ${\mathbf{z}}$ to be zero and projecting each entry of ${\mathbf{z}}$ onto $\mathcal{R}_{\mathcal{X}}$, we can obtain
\begin{equation}
	\label{zupdate}
	\mathbf{z}^{k} = \bm{\Pi}_{\mathcal{R}_{\mathcal{X}}^{K}} (\mathbf{x}^{k} + \bm{\lambda}^{k-1}),
\end{equation}
where the projection is given by
\begin{equation}
	\bm{\Pi}_{\mathcal{R}_{\mathcal{X}}^{K}}(a) = 
	\begin{cases} 
		\sqrt{L} - 1, & \text{if } a> \sqrt{L} - 1, \\
		-\sqrt{L} + 1, & \text{if } a < -\sqrt{L} + 1, \\
		a, & \text{otherwise.}
	\end{cases}
	\label{projection}
\end{equation}
By doing this, we relax the nonconvex set to a convex set and give an expression for the parameter update in the ADMM framework.

\subsection{NS-aided ADMM}

To further improve the performance, we then introduce the NS mechanism.
Considering the complexity, the NS technique searches only a one-symbol neighborhood, where vectors differ from the initial solution in exactly one entry.
Typically, the objective of NS is to minimize the following ML cost function
\begin{equation}
	\mathcal{F}_{\text{ML}} = \|  \tilde{\mathbf{y}} - \tilde{ \mathbf{H}}\mathbf{x} \|^2,
	\label{costfunction}
\end{equation}
which serves a metric to evaluate the estimated vector $\mathbf{x}$.
Given that the ML detection achieves the minimum $\mathcal{F}_{\text{ML}}$, a lower ML cost function value corresponds to a more accurate estimated vector $\mathbf{x}$.

Here, the final output of ADMM is used as an initial solution for the NS mechanism, namely, $\mathbf{x}^0 = \lceil \mathbf{x}^{T_{\text{max}}} \rfloor_{\mathcal{Q}} \in \mathcal{X}^{K}$.
We then update $\mathbf{x}^{0}$ follows the rule
\begin{equation}
	\mathbf{x}^{\text{new}} = \mathbf{x}^{0} + d_i \mathbf{u}_i,
\end{equation}
where $\mathbf{u}_i \in \mathbb{R}^{K}$ denotes the unit vector with its $i$-th element set to one and other elements zero.
To ensure both $\mathbf{x}^{\text{new}}$ and $\mathbf{x}^{0}$ remain within the constellation space $\mathcal{X}^{K}$, the value of $d_i$ must be constrained to integers that are multiples of 2.

To determine the best value for $d_i$, which minimizes the ML cost in (\ref{costfunction}), the following ML cost difference is considered
\begin{align}
	&\Delta \mathcal{F}_{\text{ML}}(d_i) = \| \tilde{\mathbf{y}} - \tilde{\mathbf{H}}\mathbf{x}^{\text{new}} \|^2 - \| \tilde{\mathbf{y}} - \tilde{\mathbf{H}}\mathbf{x}^0 \|^2 \nonumber \\
	&= d_i^2 \mathbf{u}_i^T \tilde{\mathbf{H}}^T\tilde{\mathbf{H}}\mathbf{u}_i \!+\! d_i \mathbf{u}_i^T\tilde{\mathbf{H}}^T\tilde{\mathbf{H}} \mathbf{x}^0 \!+\! d_i \mathbf{x}^0 \tilde{\mathbf{H}}^T\tilde{\mathbf{H}}\mathbf{u}_i \!-\! 2d_i \tilde{\mathbf{y}}^T\tilde{\mathbf{H}}\mathbf{u}_i
	\nonumber \\
	&= d_i^2(\tilde{\mathbf{H}}^T\tilde{\mathbf{H}})_{i,i} + 2d_i (\tilde{\mathbf{H}}^T\tilde{\mathbf{H}}\mathbf{x}^0)_i - 2d_i(\tilde{\mathbf{H}}^T\tilde{\mathbf{y}})_i \nonumber \\
	&= d_i^2 \tilde{g}_{i, i} - 2d_i (\tilde{\mathbf{H}}^T\tilde{\mathbf{y}} - \tilde{\mathbf{H}}^T\tilde{\mathbf{H}}\mathbf{x}^0)_i \nonumber \\
	&= d_i^2 \tilde{g}_{i, i} - 2d_i \tilde{e}_i,
	\label{deltaF}
\end{align} 
where $\tilde{\mathbf{G}} = \tilde{\mathbf{H}}^T\tilde{\mathbf{H}} \in \mathbb{R}^{K \times K}$ denotes the Gram matrix and $\tilde{\mathbf{e}} = \tilde{\mathbf{H}}^T(\tilde{\mathbf{y}} - \tilde{\mathbf{H}}\mathbf{x}^0)$ denotes the residual vector.

Since our goal is to reduce the ML cost function in (\ref{costfunction}), when $\Delta \mathcal{F}_{\text{ML}}(d_i)< 0$, we update the $i$-th entry of the vector $\mathbf{x}^0$.
For the case of one symbol update, the maximum reduction in $\Delta \mathcal{F}_{\text{ML}}(d_i)$ can be achieved by forcing the gradient of (\ref{deltaF}) with respect to $d_i$ to zero.
Therefore, we have
\begin{equation}
	d_i  = \mathcal{P}\left(\frac{\tilde{e}_i}{\tilde{g}_{i,.i}}\right),
	\label{round}
\end{equation}
where $\mathcal{P}(a) = 2\lceil a/2 \rfloor$ represents the projection of $a$ onto the neighboring set $\mathcal{A}= \{0, \pm 2, \pm 4, \dots\}$.
If $\Delta \mathcal{F}_{\text{ML}}(d_i)< 0$, $d_i$ in (\ref{round}) indeed reduces the ML cost function and thus improves the performance; otherwise, the original solution is retained.
Defining $\tilde{\mathbf{D}} = \text{diag}(\tilde{\mathbf{G}}) \in \mathbb{R}^{K \times K}$, the update process can be expressed as
\begin{equation}
	\mathbf{x}^{\text{new}} = \mathbf{x}^0 + \mathcal{P}(\tilde{\mathbf{D}}^{-1} \odot \tilde{\mathbf{e}}).
\end{equation}
Since $\mathbf{x}^{\text{new}}_i$ may exceed $\sqrt{L} - 1$ or fall below $- \sqrt{L} + 1$, a projection step (\ref{projection}) is applied to adjust $\mathbf{x}^{\text{new}}_i$.
To sum up, the NS update process can be divided into the following two steps
\begin{equation}
	\mathbf{x}_{\text{NS}} = \mathbf{x}^0 + \tilde{\mathbf{D}}^{-1} \odot \tilde{\mathbf{e}}
\end{equation}
and
\begin{equation}
	\hat{\mathbf{x}} = \lceil \mathbf{x}_{\text{NS}} \rfloor_{\mathcal{Q}} \in \mathcal{X}^{K}.
\end{equation}
Here, $\hat{\mathbf{x}}$ is outputted as the detection solution of NS.
Then, subsitituting $\hat{\mathbf{x}}$ into \eqref{MILS}, the problem becomes
\begin{equation}
	 \hat{\mathbf{s}} = \underset{ \mathbf{s} \in \mathbb{R}^{M}} {\text{arg  
			min}} \left\| \mathbf{y}  - \mathbf{H}\hat{\mathbf{x}} - \mathbf{As} \right\|^2.
	\label{senresult}
\end{equation}
By solving \eqref{senresult} using the least squares (LS) method, the estimated sensing signal $\hat{\mathbf{s}}$ can be expressed as
\begin{equation}
	\hat{\mathbf{s}} = (\mathbf{A}^T\mathbf{A})^{-1}\mathbf{A}^T\left(\mathbf{y}-\mathbf{H}\hat{\mathbf{x}} \right).
\end{equation}

To summarize, the proposed projection-based NS-aided ADMM (P-NS-ADMM) detection for MIMO-ISAC systems, is outlined in Algorithm~\ref{P-NS-ADMM}.

\begin{algorithm}[t]
	\vspace{0.2cm}
	\renewcommand{\algorithmicrequire}{\textbf{Input:}}
	\renewcommand{\algorithmicensure}{\textbf{Output:}}
	\caption{Projection-based NS-aided ADMM (P-NS-ADMM) Detection for MIMO-ISAC Systems}
	\label{P-NS-ADMM}
	\begin{algorithmic}[1]
		\REQUIRE ${\bf{y}}$,  ${\bf{A}}$, ${\bf{H}}$,  $T_{\text{max}}$,  $\rho$
		\ENSURE communication signals $\hat{\bf{x}}$, sensing signals $\hat{\mathbf{s}}$
		\STATE \textbf{Initialize:} $\mathbf{P}=\mathbf{I}-\mathbf{A}(\mathbf{A}^T\mathbf{A})^{-1}\mathbf{A}^T$,  $\tilde{\mathbf{H}}=\mathbf{PH}$, $\tilde{\mathbf{y}}=\mathbf{Py}$, $\mathbf{z}^0 = \mathbf{0}$, ${\bm{\lambda}}^0 = {\bf 0}$
		\FOR {$k=1, 2, ..., T_{\text{max}}$}
		\STATE update ${{\bf{x}}^k}$ via (\ref{xupdate})
		\STATE update $\mathbf{z}^k$ via (\ref{zupdate})
		\STATE update $\bm{\lambda }^{k}$ via (\ref{lamopt})
		\ENDFOR
		\STATE set initial solution $\mathbf{x}^0_{\mathrm{}} = \lceil \mathbf{x}^{T_{\text{max}}}\rfloor_\mathcal{Q} \in \mathcal{X}^{K}$
		\STATE compute $\tilde{\mathbf{D}} = \text{diag}(\tilde{\mathbf{G}})$
		\STATE compute $\tilde{\mathbf{e}} = \tilde{\mathbf{y}} - \tilde{\mathbf{G}}\mathbf{x}^0$
		\STATE compute $\mathbf{x}_{\text{NS}} = \mathbf{x}^0 + \tilde{\mathbf{D}}^{-1} \odot \mathbf{e}$
		\STATE output $\hat{\mathbf{x}} =\lceil \mathbf{x}_{\text{NS}}\rfloor_\mathcal{Q}\in \mathcal{X}^{K}$
		\STATE output $\hat{\mathbf{s}} = \bm{\Xi} \mathbf{A}^T(\mathbf{y} - \mathbf{A}\hat{\mathbf{x}})$
	\end{algorithmic}
	\vspace{-0.1cm}
\end{algorithm}

\subsection{Complexity Analysis}
\label{secivc}

Here, the computational complexity is evaluated in terms of the number of real multiplications required, with each complex multiplication counted as four real multiplications \cite{10470369}.
The computational complexity of the P-NS-ADMM algorithm is comprised of three parts.
In the first part, the computational complexity of $\mathbf{P}$, $\tilde{\mathbf{H}}$, and  $\tilde{\mathbf{y}}$ is $M^3 + N^2M + 2M^2N$, $N^2K$, and $N^2$,  respectively.
Moreover, the calculation of $(\tilde{\mathbf{H}}^T\tilde{\mathbf{H}}+\rho \mathbf{I})^{-1}$ and $\tilde{\mathbf{H}}^{T}\tilde{\mathbf{y}}$ results in a complexity of $K^3 + K^2N$ and $NK$.
The second part involves ADMM iterations and the NS mechanism.
In the ADMM iterations, the computation of \eqref{xupdate} and \eqref{zupdate} requires $K^2+K$ and $K$ real multiplications, respectively.
As for the NS mechanism, the calculation of the residual vector $\tilde{\mathbf{e}}$ requires a complexity of $K^2$.
Subsequently, the calculation of $\mathbf{x}_{\mathrm{NS}}$, which involves a vector inversion and multiplication, demands a complexity of $2K$.
In the final part,  computing the estimated sensing signals $\hat{\mathbf{s}}$ requires a complexity of $MK + NK$.

To summarize, the computational complexity is given in Table~\ref{complexity}, where $T_{\text{max}}$ denotes the number of ADMM iterations,  $Q=\mathrm{log}_4L$ denotes the modulation order, and $T_\text{rand}$ represents the number of randomizations in the P-SDR algorithm.
For benchmarking, we integrate four state-of-the-art detectors into the proposed projection framework, where the prefix 'P-' marks the adaptation to this framework: 1) P-ADMIN \cite{shahabuddin2021admm}, employing the infinity-norm box constraint strategy; 2) P-PS-ADMM \cite{zhang2022efficient}, utilizing the consensus-based penalty sharing mechanism; and 3) P-SDR \cite{5370664}, applying semidefinite relaxation as a high-performance convex baseline.
These baselines share the identical preprocessing but apply their respective update rules to solve the problem \eqref{comdetection}.

\section{Detection Performance Analysis of P-NS-ADMM}

In this section, we show that the P-NS-ADMM algorithm achieves the same received diversity order as the optimal ML detection.

\subsection{The Post-projection Communication Channel}

We first derive the expression for the post-projection channel matrix $\tilde{\mathbf{H}}$ and then obtain the distribution of its entries.
\begin{lemma}
	Given the channel matrix $\mathbf{H}$ whose entries follow $\mathcal{N}(0, \frac{1}{2})$, as $N \to \infty$ and $N \gg M$, the entries of the post-projection channel matrix $\tilde{\mathbf{H}}$ i.i.d. follows
	\begin{equation}
		\tilde{h}_{i,j} \sim \mathcal{N}\left(0, \frac{1}{2}\left(1 - \frac{M}{N}\right)\right).
	\end{equation}
\end{lemma}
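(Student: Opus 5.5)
We write $\tilde{\mathbf{H}} = \mathbf{P}\mathbf{H}$, where $\mathbf{P} = \mathbf{I}_N - \mathbf{A}(\mathbf{A}^T\mathbf{A})^{-1}\mathbf{A}^T$ depends only on the deterministic steering matrix $\mathbf{A}$. It is therefore independent of $\mathbf{H}$. The plan is to first get the exact conditional law of each column of $\tilde{\mathbf{H}}$ given $\mathbf{P}$, and then use the structure of the ULA steering vectors as $N\to\infty$ to show that the diagonal of $\mathbf{P}$ is asymptotically flat and its off-diagonal entries vanish.

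\emph{Exact column law.} The columns $\mathbf{h}_j$ of $\mathbf{H}$ in \eqref{ctor} are real Gaussian vectors. Since $\bar{\mathbf{H}}$ has i.i.d. $\mathcal{CN}(0,1)$ entries, each real column has i.i.d. $\mathcal{N}(0,\tfrac12)$ entries. Column $j$ and column $j+U$ share the same underlying complex entries, but they are uncorrelated because $E[\Re\,\Im]=0$ and the signs are arranged accordingly. Distinct columns are therefore jointly Gaussian and uncorrelated, hence independent. Because $\tilde{\mathbf{h}}_j = \mathbf{P}\mathbf{h}_j$ is linear, $\tilde{\mathbf{h}}_j \sim \mathcal{N}(\mathbf{0}, \tfrac12 \mathbf{P}\mathbf{P}^T) = \mathcal{N}(\mathbf{0}, \tfrac12\mathbf{P})$, using $\mathbf{P}=\mathbf{P}^T=\mathbf{P}^2$. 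Distinct columns remain independent. Each entry is thus exactly Gaussian with zero mean and variance $\tfrac12 p_{i,i}$, and the whole question reduces to the entries of $\mathbf{P}$.

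\emph{Diagonal and off-diagonal entries of $\mathbf{P}$.} Using $\mathbf{P}^T\mathbf{P}=\mathbf{P}$, we have $\mathrm{Tr}(\mathbf{P}) = N - \mathrm{Tr}((\mathbf{A}^T\mathbf{A})^{-1}\mathbf{A}^T\mathbf{A}) = N - M$, assuming $\mathbf{A}$ has full column rank (distinct angles). So the average diagonal entry is exactly $1 - M/N$. To show each $p_{i,i}$ is close to this average, we use the asymptotic orthogonality of steering vectors already invoked in Section~\ref{sec2}: $\bar{\mathbf{A}}^H\bar{\mathbf{A}} \to \mathbf{I}_{M_t}$, hence $\mathbf{A}^T\mathbf{A}\to \mathbf{I}_M$ and $\mathbf{P} \approx \mathbf{I}_N - \mathbf{A}\mathbf{A}^T$. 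For the ULA every entry of $\mathbf{a}_r(\theta)$ has modulus $1/\sqrt{N_r}$. The diagonal entry of $\bar{\mathbf{A}}\bar{\mathbf{A}}^H$ is therefore $M_t/N_r$, which in the real representation gives $(\mathbf{A}\mathbf{A}^T)_{i,i} = M_t/N_r = M/N$ for every $i$. This yields $p_{i,i} \to 1 - M/N$ uniformly in $i$.

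The off-diagonal entries are sums of $M_t$ terms of size $1/N_r$ each, so $|p_{i,l}| = O(M/N)$. This vanishes when $N \gg M$, so the row-wise correlations inside each column of $\tilde{\mathbf{H}}$ are negligible. Since each column is jointly Gaussian, vanishing covariance gives asymptotic independence, and we conclude that the entries are i.i.d. $\mathcal{N}(0, \tfrac12(1-M/N))$.

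\emph{Main obstacle.} The delicate point is justifying the passage from $(\mathbf{A}^T\mathbf{A})^{-1}$ to $\mathbf{I}_M$ together with the asymptotic independence claim. The off-diagonal entries of $\mathbf{P}$ are small individually but never exactly zero, since $\mathbf{P}$ has rank $N-M$. I would first bound $\|(\mathbf{A}^T\mathbf{A})^{-1} - \mathbf{I}\|$ via the $O(1/N_r)$ Dirichlet-kernel decay of $|\mathbf{a}_r^H(\theta)\mathbf{a}_r(\phi)|$ for fixed distinct angles. I would then state the i.i.d. conclusion in the entrywise, finite-dimensional-marginal sense appropriate for $N\gg M$, which is the level of rigor the subsequent diversity analysis needs.
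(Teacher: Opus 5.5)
Your route is the paper's own. The paper expands $\tilde h_{i,j}$ entrywise and computes the mean, the variance $\frac12(1-\sum_m a_{i,m}^2)=\frac12(1-M/N)$, and the within-column covariance $-\frac12\sum_m a_{i_1,m}a_{i_2,m}=O(M/N)$, after replacing $(\mathbf{A}^T\mathbf{A})^{-1}$ by $\mathbf{I}$. Your statement $\tilde{\mathbf{h}}_j\sim\mathcal{N}(\mathbf{0},\frac12\mathbf{P})$ packages exactly these computations. Your use of $\mathrm{Tr}(\mathbf{P})=N-M$, the Dirichlet-kernel control of $(\mathbf{A}^T\mathbf{A})^{-1}-\mathbf{I}$, and the finite-dimensional-marginal reading is, if anything, more careful than the paper.

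There is, however, one false step: the claim that columns $j$ and $j+U$ of the real-valued $\mathbf{H}$ in \eqref{ctor} are uncorrelated. With $\mathbf{J}=\left[\begin{smallmatrix}\mathbf{0}&-\mathbf{I}\\ \mathbf{I}&\mathbf{0}\end{smallmatrix}\right]$ one has $\mathbf{h}_{j+U}=\mathbf{J}\mathbf{h}_j$ identically. Hence $h_{i+N_r,j+U}=h_{i,j}=\Re(\bar h_{i,j})$ and $E[\mathbf{h}_j\mathbf{h}_{j+U}^T]=\frac12\mathbf{J}^T\neq\mathbf{0}$. What $E[\Re\,\Im]=0$ actually gives you is two weaker facts. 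Same-row entries $h_{i,j}$ and $h_{i,j+U}$ are independent, and $\mathbf{h}_j^T\mathbf{h}_{j+U}=0$ (the two vectors are orthogonal, here even deterministically). Neither implies that the two random vectors are uncorrelated.

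The dependence survives the projection. $\mathbf{P}$ is the real representation of the complex projector $\bar{\mathbf{P}}=\mathbf{I}-\bar{\mathbf{A}}(\bar{\mathbf{A}}^H\bar{\mathbf{A}})^{-1}\bar{\mathbf{A}}^H$, so it commutes with $\mathbf{J}$. Therefore $\tilde{\mathbf{h}}_{j+U}=\mathbf{J}\tilde{\mathbf{h}}_j$ and $\tilde h_{i+N_r,j+U}=\tilde h_{i,j}$ exactly. Consequently the across-column independence, and with it the fully i.i.d. conclusion, cannot be derived from \eqref{ctor} by any argument. The paper does not try to. It simply invokes ``the mutual independence of zero-mean $h_{i,j}$'', i.e., it reads the hypothesis as $\mathbf{H}$ having i.i.d. $\mathcal{N}(0,\frac12)$ entries.

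To repair your proof, you have two options:
\begin{itemize}
\item Adopt that hypothesis explicitly. Column independence is then immediate, and the rest of your argument goes through unchanged.
\item Keep \eqref{ctor} and restrict the conclusion to the first $U$ columns of $\tilde{\mathbf{H}}$. These stack the real and imaginary parts of $\bar{\mathbf{P}}\bar{\mathbf{H}}$ and are built from the genuinely i.i.d. matrix $[\Re(\bar{\mathbf{H}});\Im(\bar{\mathbf{H}})]$. The remaining columns are then determined by $\tilde{\mathbf{h}}_{j+U}=\mathbf{J}\tilde{\mathbf{h}}_j$.
\end{itemize}
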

\begin{proof}
Based on the asymptotic orthogonality of steering vectors, the column vectors of ${\bar{\bf A}} = \left[ {{{\bf{a}}_r}({\theta _1}), \ldots ,{{\bf{a}}_r}({\theta _{M_t}})} \right] \in {\mathbb{C}^{{N_r} \times M_t}}$ are mutually orthogonal.
As the real-valued representation of $\bar{\mathbf{A}}$, the column vectors of $\mathbf{A}$ remain mutually orthogonal.

By defining $\mathbf{A} = [\mathbf{a}_1, \ldots, \mathbf{a}_{M}] \in \mathbb{R}^{N \times M}$, the product $\mathbf{A}^T\mathbf{A}$ takes the form
\begin{equation}
	\mathbf{A}^T\mathbf{A} = \text{diag}(\| \mathbf{a}_1\|^2, \ldots, \| \mathbf{a}_{M}\|^2).
\end{equation}
The projection matrix $\mathbf{P}$ is then given by
\begin{equation}
	\mathbf{P} = \mathbf{I} - \mathbf{A}(\mathbf{A}^T\mathbf{A})^{-1}\mathbf{A}^T = \mathbf{I} - \sum_{m=1}^{M} \frac{\mathbf{a}_m\mathbf{a}_m^T}{\| \mathbf{a}_m\|^2}.
\end{equation}
In this way,  the post-projection channel matrix can be expressed as
\begin{equation}
	\tilde{\mathbf{H}} = \mathbf{PH} = \mathbf{H} - \sum_{m=1}^{M} \frac{\mathbf{a}_m\mathbf{a}_m^T}{\| \mathbf{a}_m\|^2}\mathbf{H}.
\end{equation}
Given $\| \mathbf{a}_m\|^2=1$, the elements in $\tilde{\mathbf{H}}$ can be written as
\begin{equation}
	\tilde{h}_{i, j} = h_{i, j} - \sum_{m=1}^{M} a_{i, m}\sum_{n=1}^{N} a_{n, m}h_{n, j}. 
\end{equation}

To obtain the distribution of $\tilde{h}_{i, j}$, we analyze the mean and variance of $\tilde{h}_{i, j}$ and covariance matrix of $\tilde{\mathbf{H}}$.
Given the fact that $h_{i, j} \sim \mathcal{N}(0, \frac{1}{2})$, the mean of $\tilde{h}_{i, j}$ can be expressed by
\begin{equation}
	E[\tilde{h}_{i, j}] = E[h_{i, j}] - \sum_{m=1}^{M} a_{i, m}\sum_{n=1}^{N} a_{n, m}E[h_{n, j}] = 0,
\end{equation}
which results in
\begin{align}
	&Var[\tilde{h}_{i, j}] = E[\tilde{h}_{i, j}^2] 
	= E\left[\left(h_{i, j} - \sum_{m=1}^{M} a_{i, m}\sum_{n=1}^{N} a_{n, m}h_{n, j}\right)^2\right] \nonumber \\
	&=  E[h_{i,j}^2] - 2E\left[h_{i, j} \sum_{m=1}^{M} a_{i, m} \sum_{n=1}^{N} a_{n, m}h_{n, j}\right] \! \nonumber \\
	&+ E  \left[\sum_{p=1}^{M}\!\sum_{m=1}^{M}a_{i, p}a_{i, m}\!\sum_{q=1}^{N}\!\sum_{n=1}^{N}a_{q, p}a_{n, m}h_{q, j}h_{n, j}\right]. 
	\label{var}
\end{align}
Here, the first term is $E[h_{i,j}^2] = \sigma_h^2$ and the second term reduces to
\begin{align}
	&E\left[h_{i, j} \sum_{m=1}^{M} a_{i, m} \sum_{n=1}^{N} a_{n, m}h_{n, j}\right] \nonumber \\
	&=\! \!\sum_{m=1}^{M} a_{i, m}\sum_{n=1}^{N} a_{n, m} E[h_{i, j}h_{n, j}] = \sigma_h^2 \sum_{m=1}^{M} a_{i, m}^2.
\end{align}
Since $E[h_{i, j}h_{n, j}] \neq 0$ if and only if $i = n$, the last equality holds.
Similarly, the third term simplifies to
\begin{align}
	&E\left[\sum_{p=1}^{M}\sum_{m=1}^{M}a_{i, p}a_{i, m}\sum_{q=1}^{N}\sum_{n=1}^{N}a_{q, p}a_{n, m}h_{q, j}h_{n, j}\right] \nonumber \\
	&= \sum_{p=1}^{M}\!\sum_{m=1}^{M}a_{i, p}a_{i, m}\!\sum_{q=1}^{N}\!\sum_{n=1}^{N}a_{q, p}a_{n, m} E[h_{q, j}h_{n, j}]\nonumber \\
	&= \sigma_h^2\sum_{p=1}^{M}\!\sum_{m=1}^{M}a_{i, p}a_{i, m}(\mathbf{a}_{p}^T\mathbf{a}_{m})\nonumber \\
	&= \sigma_h^2 \sum_{m=1}^{M} a_{i, m}^2,
\end{align}
where the last equality holds due to $| \mathbf{a}_m|^2=1$ and the asymptotic orthogonality property.
Therefore, (\ref{var}) can be rewritten as
\begin{align}
	Var[\tilde{h}_{i, j}] &= \sigma_h^2 - 2 \sigma_h^2 \sum_{m=1}^{M} a_{i, m}^2 + \sigma_h^2 \sum_{m=1}^{M} a_{i, m}^2 \nonumber\\
	&= \sigma_h^2\left(1 -  \sum_{m=1}^{M} a_{i, m}^2\right) = \frac{1}{2}\left(1-\frac{M}{N}\right).
\end{align}

Finally, we derive the covariance of $\tilde{h}_{i_1, j_1}$ and $\tilde{h}_{i_2, j_2}$.
Given $E[\tilde{h}_{i, j}] = 0$, the covariance reduces to $Cov(\tilde{h}_{i_1, j_1}, \tilde{h}_{i_2, j_2}) = E[\tilde{h}_{i_1, j_1}\tilde{h}_{i_2, j_2}]$.
For $j_1 \neq j_2$, the mutual independence of zero-mean $h_{i,j}$ yields $E[\tilde{h}_{i_1, j_1}\tilde{h}_{i_2, j_2}] = 0$.
When $j_1 = j_2 = j$ with $i_1 \neq i_2$, the product $\tilde{h}_{i_1, j}\tilde{h}_{i_2, j}$ can be expressed as
\begin{align}
	&\tilde{h}_{i_1, j}\tilde{h}_{i_2, j} \!\nonumber\\
	&= - h_{i_1, j}\sum_{p=1}^{M} a_{i_2, p}\!\sum_{q=1}^{N} a_{q, p}h_{q, j} \!-\! h_{i_2, j}\sum_{m=1}^{M} a_{i_1, m}\!\sum_{n=1}^{N} a_{n, m}h_{n, j} \nonumber\\
	& \!+\! h_{i_1, j}h_{i_2, j} \!+\!\! \left(\sum_{m=1}^{M} a_{i_1, m}\!\sum_{n=1}^{N} a_{n, m}h_{n, j}\!\!\right)\!\! \left( \sum_{p=1}^{M} a_{i_2, p}\!\sum_{q=1}^{N} a_{q, p}h_{q, j}\!\!\right)\!\!,
\end{align}
which results in
\begin{align}
	&\mathbb{E}[\tilde{h}_{i_1, j}\tilde{h}_{i_2, j}] \nonumber\\
	&= -\frac{1}{2} \sum_{p=1}^{M} a_{i_2, p}\sum_{q=1}^{N} a_{q, p} \delta_{i_1,q}  - \frac{1}{2} \sum_{m=1}^{M} a_{i_1, m}\sum_{n=1}^{N} a_{n, m}\delta_{i_2,n}\nonumber\\
	&+ \frac{1}{2} \sum_{m=1}^{M}\sum_{p=1}^{M} a_{i_1, m}a_{i_2, p} \sum_{n=1}^{N}\sum_{q=1}^{N} a_{n, m}a_{q, p} \cdot \delta_{n,q}\nonumber\\
	&=-\sum_{m=1}^{M} a_{i_1, m}a_{i_2, m} + \frac{1}{2}\sum_{m=1}^{M}\sum_{p=1}^{M} a_{i_1, m}a_{i_2, p} (\mathbf{a}_m^T\mathbf{a}_p)\nonumber\\
	&= -\sum_{m=1}^{M} a_{i_1, m}a_{i_2, m} + \frac{1}{2}\sum_{m=1}^{M} a_{i_1, m}a_{i_2, m} \nonumber\\
	&= -\frac{1}{2}\sum_{m=1}^{M} a_{i_1, m}a_{i_2, m} \approx 0.
	\label{cov}
\end{align}
In (\ref{cov}), since $\sum_{m=1}^{M}a_{i_1, m}a_{i_2, m} < M/N$, the last approximation holds under the assumptions of $N \gg M$.
In summary, we can obtain $\tilde{h}_{i, j} \sim \mathcal{N}\left(0, \frac{1}{2}\left(1 - \frac{M}{N}\right)\right)$, thus completing the proof.
\end{proof}

\subsection{Equivalent Noise}

In the P-NS-ADMM algorithm, we obtain $\mathbf{x}^{T_{\text{max}}}$ through ADMM iterations and then $\mathbf{x}^{T_{\text{max}}}$ is porjected onto the constellation set $\mathcal{X}^{K}$ as an initial input for the NS mechanism, which can be expressed as 
\begin{align}
	\mathbf{x}^0 &= \lceil \mathbf{x}^{T_{\text{max}}}  \rfloor_{\mathcal{Q}} \nonumber\\
	&= \mathbf{x} + \mathbf{n}^0 + \bm{\delta}\nonumber\\
	&= \mathbf{x} + \mathbf{w}.
	\label{x0}
\end{align}
Here $\mathbf{x}^0 \in \mathbb{R}^{K}$ denotes the initial input and $\mathbf{n}^0 \in \mathbb{R}^{K}$ denotes the equivalent noise vector from the ADMM iterations.
Additionally, $\mathbf{w} \in \mathcal{A}^{K}$ is the distance vector from the neighboring set, and $\bm{\delta} = \mathbf{w} - \mathbf{n}^0 \in \mathbb{R}^{K}$ rounds the solution to the nearest constellation point.
Clearly, the elements in $\mathbf{w}$ are distributed as follows:
\begin{equation}
	w_i =
	\begin{cases} 
		0, & \text{if } |\mathbf{n}^0_i| \leq 1, \\
		2, & \text{if } 1 < \mathbf{n}_i^0 \leq 3, \\
		-2, & \text{if } -3 \leq  \mathbf{n}_i^0 < -1,\\
		\cdots &\cdots
	\end{cases}
\end{equation}
for $i = 1, 2, \ldots, K$.
Substituting (\ref{x0}) into the residual vector $\tilde{\mathbf{e}} = \tilde{\mathbf{H}}^T(\tilde{\mathbf{y}} - \tilde{\mathbf{H}}\mathbf{x}_0)$, we can obtain
\begin{align}
	\tilde{\mathbf{e}} &= \tilde{\mathbf{H}}^T(\tilde{\mathbf{y}} - \tilde{\mathbf{H}}\mathbf{x}_0)\nonumber \\
	&= \tilde{\mathbf{H}}^T(\tilde{\mathbf{H}}\mathbf{x} +\tilde{\mathbf{n}} -\tilde{\mathbf{H}}(\mathbf{x} + \mathbf{w}) )\nonumber\\
	&= \tilde{\mathbf{H}}^T\tilde{\mathbf{n}} - \tilde{\mathbf{H}}^T\tilde{\mathbf{H}}\mathbf{w},
\end{align}
where $\tilde{\mathbf{n}}$ is the post-projection noise.
Thus, through the update of NS, the solution $\mathbf{x}_{\text{NS}}$ is given by
\begin{align}
	\mathbf{x}_{\text{NS}} &= \mathbf{x}^0 + \tilde{\mathbf{D}}^{-1} \odot \mathbf{e} \nonumber\\
	&= \mathbf{x} + \mathbf{w} + \tilde{\mathbf{D}}^{-1} \odot (\tilde{\mathbf{H}}^T\tilde{\mathbf{n}} - \tilde{\mathbf{H}}^T\tilde{\mathbf{H}}\mathbf{w}) \nonumber \\
	&= \mathbf{x} + \mathbf{w} + \tilde{\mathbf{D}}^{-1} \odot (\tilde{\mathbf{H}}^T\tilde{\mathbf{n}}) - \tilde{\mathbf{D}}^{-1} \odot (\tilde{\mathbf{H}}^T\tilde{\mathbf{H}}\mathbf{w}),
	\label{xns}
\end{align}
which will be projected onto the constellation set $\mathcal{X}^{K}$ as the final detection output.
We now analyze the equivalent noise of the P-NS-ADMM algorithm.
\begin{lemma}
Given the channel matrix $\tilde{\mathbf{H}}$ whose entries follow $\mathcal{N}\left(0, \frac{1}{2}\left(1 - \frac{M}{N}\right)\right)$, the distance vector $\mathbf{w}$ follows that
\label{enoise}
\begin{equation}
	\mathbf{w} = \tilde{\mathbf{D}}^{-1} \odot (\tilde{\mathbf{H}}^T\tilde{\mathbf{H}}\mathbf{w})
\end{equation}
with the increment of SNR.
\end{lemma}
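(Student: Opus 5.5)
The plan is to read the identity in Lemma~\ref{enoise} as an asymptotic statement: the matrix $\tilde{\mathbf{D}}^{-1}\tilde{\mathbf{G}}$ acting on $\mathbf{w}$ reduces to the identity. "With the increment of SNR" then means that the residual cross-term becomes negligible, and in particular smaller than the rounding margin of $\lceil\cdot\rfloor_{\mathcal{Q}}$, with probability tending to one. Componentwise, since $\tilde{\mathbf{D}}=\text{diag}(\tilde{\mathbf{G}})$, I will write
\begin{equation*}
\left(\tilde{\mathbf{D}}^{-1}\odot(\tilde{\mathbf{G}}\mathbf{w})\right)_i = w_i + \sum_{j\neq i}\frac{\tilde{g}_{i,j}}{\tilde{g}_{i,i}}\,w_j ,
\end{equation*}
so the whole task is to show that the interference term $\epsilon_i=\sum_{j\neq i}(\tilde{g}_{i,j}/\tilde{g}_{i,i})w_j$ vanishes.

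First I would control the ratios $\tilde{g}_{i,j}/\tilde{g}_{i,i}$ using the previous lemma. Since the entries of $\tilde{\mathbf{H}}$ are (asymptotically) i.i.d. $\mathcal{N}\!\left(0,\tfrac12(1-\tfrac{M}{N})\right)$, the diagonal entry $\tilde{g}_{i,i}=\sum_n \tilde{h}_{n,i}^2$ satisfies, by the law of large numbers,
\begin{equation*}
\tilde{g}_{i,i}/N\to \tfrac12\left(1-\tfrac{M}{N}\right)>0 .
\end{equation*}
The off-diagonal entry $\tilde{g}_{i,j}=\sum_n\tilde{h}_{n,i}\tilde{h}_{n,j}$ is a sum of $N$ zero-mean uncorrelated products. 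It therefore has mean zero and variance $N\cdot\tfrac14(1-\tfrac{M}{N})^2$, so $\tilde{g}_{i,j}=O_p(\sqrt{N})$. Hence $\tilde{g}_{i,j}/\tilde{g}_{i,i}=O_p(1/\sqrt{N})$, which is the usual channel-hardening / favorable-propagation effect in the large-scale regime.

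Second I would use the SNR to control $\mathbf{w}$ itself. From \eqref{x0}, $w_i\neq 0$ only when the ADMM equivalent noise satisfies $|n^0_i|>1$. As $\rho=1/\sigma_c^2\to\infty$, the ADMM output concentrates around $\mathbf{x}$, so $P(|n^0_i|>1)\to 0$. I would argue that the number of nonzero entries of $\mathbf{w}$ is then bounded, with high probability by a small constant $s$, and each nonzero entry has magnitude at most $2(\sqrt{L}-1)$ because of the projection \eqref{projection}. This gives $|\epsilon_i|\le 2s(\sqrt{L}-1)\max_{j\neq i}|\tilde{g}_{i,j}/\tilde{g}_{i,i}| = O_p(1/\sqrt{N})$. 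Combining both limits, $\epsilon_i\to 0$, and in particular $|\epsilon_i|<1$ with probability tending to one. Since $w_i\in\mathcal{A}$ lies on the even-integer lattice, this means $\tilde{\mathbf{D}}^{-1}\odot(\tilde{\mathbf{G}}\mathbf{w})$ and $\mathbf{w}$ coincide after quantization. That yields the stated identity, in the sense in which it is used in \eqref{xns}: the terms $\mathbf{w}$ and $-\tilde{\mathbf{D}}^{-1}\odot(\tilde{\mathbf{G}}\mathbf{w})$ cancel.

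The main obstacle is the second step. The paper gives no explicit characterization of the ADMM equivalent noise $\mathbf{n}^0$, so showing that $\mathbf{w}$ becomes sparse or vanishing as SNR grows needs an argument about the ADMM fixed point. My first attempt would be the following. At high SNR the relaxed problem \eqref{relax} has the true $\mathbf{x}$ (interior to, or on the boundary of, the box) as its noiseless solution. The regularized least-squares update \eqref{xupdate} is a continuous, nonexpansive map of $\tilde{\mathbf{n}}$. Therefore $\|\mathbf{n}^0\|\le c\|\tilde{\mathbf{H}}^T\tilde{\mathbf{n}}\|/\lambda_{\min}(\tilde{\mathbf{G}}+\rho\mathbf{I})$, which tends to zero in probability as $\sigma_c^2\to0$. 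If that bound proves hard to make uniform in $T_{\text{max}}$, I would fall back on the weaker claim that only a bounded number of entries of $\mathbf{w}$ are nonzero. That claim suffices for the $O_p(1/\sqrt{N})$ bound on $\epsilon_i$ above.
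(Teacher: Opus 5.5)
Your skeleton matches the paper's proof. You split $b_i=w_i+\sum_{j\neq i}(\tilde g_{i,j}/\tilde g_{i,i})w_j$, use that $\tilde g_{i,i}$ concentrates around $(N-M)/2$ while $\tilde g_{i,j}$ has variance $\frac{N}{4}(1-\frac{M}{N})^2$, and let $N\gg K$ together with high SNR make the cross term negligible.

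The difference is in how the cross term is controlled. The paper treats $b_i$ as conditionally Gaussian given $\tilde g_{i,i}$, i.e.\ it regards $\mathbf w$ as fixed relative to the $\tilde g_{i,j}$. It then replaces $\tilde g_{i,i}^2$ by $E[\tilde g_{i,i}^2]=\frac{(N+2)(N-M)^2}{4N}$ and obtains $\sigma_{b_i}^2\approx\frac{K-1}{N+2}E[w_j^2]$. The random signs of the $\tilde g_{i,j}$ make the cross term fluctuate on the order $\sqrt{K/N}$, rather than the worst-case order $K/\sqrt{N}$ up to logarithms. So the paper needs no sparsity of $\mathbf w$, and bounded $w_j$ already suffice once $N\gg K$.

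Your triangle-inequality bound avoids that independence heuristic, which is questionable because $\mathbf w$ is computed by ADMM from the same $\tilde{\mathbf H}$. But it gives up the cancellation. With $s$ nonzero entries you get $|\epsilon_i|=O_p(s\sqrt{\log K/N})$, since a maximum over $K-1$ indices is not $O_p(1/\sqrt{N})$ unless $K$ is fixed. For fixed $K$ this is harmless: $s\le K-1$ holds automatically, and no SNR or ADMM argument is needed. Only when $K$ grows with $N$ do you genuinely rely on sparsity.

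Also, $b_i$ is never rounded on its own. The residual $\epsilon_i$ enters $\mathbf x_{\mathrm{NS}}$ additively with $\mathbf n_{\mathrm{NS}}$ before the single quantization step. So $|\epsilon_i|<1$ only leaves a shrunken decision margin. What you need, and also assert, is $\epsilon_i\to 0$, which is exactly the paper's $b_i\approx E[b_i]=w_i$.

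Your proposed justification of the sparsity claim does not work. The bound $\|\mathbf n^0\|\le c\|\tilde{\mathbf H}^T\tilde{\mathbf n}\|/\lambda_{\min}(\tilde{\mathbf G}+\rho\mathbf I)$ presumes that noiseless ADMM returns $\mathbf x$. After finitely many iterations from $\mathbf z^0=\bm{\lambda}^0=\mathbf 0$ it does not: already $\mathbf x^1=(\tilde{\mathbf G}+\rho\mathbf I)^{-1}\tilde{\mathbf G}\mathbf x\neq\mathbf x$. So $\mathbf n^0$ contains a noise-independent bias, and continuity in $\tilde{\mathbf n}$ only gives convergence to that bias as $\sigma_c^2\to 0$. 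Here also keep the ADMM penalty $\rho$ separate from the SNR $\rho=1/\sigma_c^2$.

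The paper does not prove this step either; it simply takes $E[w_j^2]$ to be small at high SNR, and you could do the same. Finally, if you did establish $\mathbf n^0\to\mathbf 0$, then $|n^0_i|\le 1$ for all $i$ with high probability. That gives $\mathbf w=\mathbf 0$, so the identity holds trivially and your first step becomes superfluous in that regime.
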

\begin{proof}
In the case of $i=j$, since $\tilde{h}_{i, j}^2$ follows the  Gamma distribution with the shape parameter $\alpha=\frac{1}{2}$ and the scale parameter $\gamma=1-\frac{M}{N}$, the summation $\tilde{g}_{i, i} = \sum_{n = 1}^{N}\tilde{h}_{n, i}^2$ follows:
\begin{equation}
	\tilde{g}_{i, i} \sim \Gamma \left(\frac{N}{2}, 1 - \frac{M}{N} \right).
	\label{gii}
\end{equation}
For $i \neq j$, $\tilde{h}_{n, i}$ and $\tilde{h}_{n, j}$ are independent Gaussian variables with mean zero and variance $\frac{1}{2}(1 - \frac{M}{N})$, resulting in
\begin{equation}
	\tilde{g}_{i, j} = \sum_{n = 1}^{N}\tilde{h}_{n, i}\tilde{h}_{n, j} \sim \mathcal{N}\left(0, \frac{N}{4}\left(1-\frac{M}{N}\right)^2\right).
	\label{gij}
\end{equation}

Then we define
\begin{equation}
	\mathbf{b} = \mathbf{D}^{-1} \odot (\tilde{\mathbf{H}}^T\tilde{\mathbf{H}}\mathbf{w}) = (\text{diag}(\mathbf{G}))^{-1} \odot (\mathbf{Gw})
\end{equation}
with the $i$-th element
\begin{equation}
	b_i = w_i + \frac{1}{\tilde{g}_{i, i}}\sum_{j \neq i}^{K} \tilde{g}_{i, j} w_j.
\end{equation}
Conditioned on $\tilde{g}_{i,i}$, $b_i$ represents a linear combination of independent Gaussian variables $\tilde{g}_{i,j}$. 
Consequently, $b_i$ follows a conditional Gaussian distribution expressed as
\begin{equation}
	b_i \sim \mathcal{N}\left(w_i, \frac{(N - M)^2}{4Ng_{i, i}^2}\sum_{j \neq i}^{K}w_j^2 \right).
	\label{bi}
\end{equation}
Given $\tilde{g}_{i, i} \sim \Gamma\left(\frac{N}{2}, 1 - \frac{M}{N}\right)$, its mean and variance can be simply expressed as
\begin{align}
	&E[\tilde{g}_{i, i}] = \frac{N}{2}\left(1 - \frac{M}{N}\right) = \frac{N-M}{2}, \nonumber\\
	&Var[\tilde{g}_{i, i}] = \frac{N}{2} \left(1 - \frac{M}{N}\right)^2 = \frac{(N - M)^2}{2N},
\end{align}
which leads to
\begin{equation}
	E[\tilde{g}_{i, i}^2]= E^2[\tilde{g}_{i, i}] + Var[\tilde{g}_{i, i}] = \frac{(N + 2)(N - M)^2}{4N}.
\end{equation}

Taking the expectation of $\tilde{g}_{i, i}^2$  in (\ref{bi}), the variance of $b_i$ can be approximated as
\begin{align}
	\sigma_{b_i}^2 &\approx \frac{(N - M)^2}{4N} \frac{4N}{(N + 2)(N - M)^2}(K -1)E[w_j^2 ] \nonumber \\
	&= \frac{K - 1}{N + 2}E[w_j^2 ] < \varepsilon,
	\label{bivar}
\end{align}
where $\varepsilon$ denotes an arbitrarily small positive number. 
This inequality holds given the assumptions of $N \gg K$ and high SNR regimes.
Under these conditions, the variance $\sigma_{b_i}^2$ approaches zero, justifying the approximation $b_i \approx E[b_i]$ \cite{chen2025decentralized}, thus completing the proof.
\end{proof}

Based on Lemma~\ref{enoise}, the NS solution $\mathbf{x}_{\text{NS}}$ in (\ref{xns}) simplifies to
\begin{equation}
	\mathbf{x}_{\text{NS}} = \mathbf{x} + \tilde{\mathbf{D}}^{-1} \odot (\tilde{\mathbf{H}}^T\tilde{\mathbf{n}}).
	\label{xns_new}
\end{equation}
Therefore, given the post-projection channel matrix $\tilde{\mathbf{H}}$, the equivalent noise of the P-NS-ADMM algorithm is
\begin{equation}
	\mathbf{n}_{\text{NS}} = \tilde{\mathbf{D}}^{-1} \odot (\tilde{\mathbf{H}}^T\tilde{\mathbf{n}}).
	\label{equnoise}
\end{equation}

\subsection{Received Diversity Order}
In this subsection, we derive the post-processing SNR distribution for P-NS-ADMM, which enables subsequent characterization of the received diversity order.

\begin{lemma}
Given the equivalent noise in \eqref{equnoise}, the postprocessing SNR $\gamma_i$ on the $i$-th element of the P-NS-ADMM algorithm is
\begin{equation}
	\gamma_i=\gamma_0 \tilde{g}_{i, i},
\end{equation}
where $\gamma_0=E[x_i^2]/ \sigma^2_n$ denotes the average SNR on each element.
\end{lemma}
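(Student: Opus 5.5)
The plan is to compute the post-processing SNR directly from the decoupled model \eqref{xns_new}. By Lemma~\ref{enoise}, the $i$-th entry of the NS output is
\begin{equation}
	x_{\text{NS},i} = x_i + n_{\text{NS},i}, \qquad n_{\text{NS},i} = \frac{1}{\tilde{g}_{i,i}}\,\tilde{\mathbf{h}}_i^T\tilde{\mathbf{n}},
\end{equation}
where $\tilde{\mathbf{h}}_i$ is the $i$-th column of $\tilde{\mathbf{H}}$. This is a scalar signal-plus-noise model in which the signal has unit gain. The SNR on this element is therefore $\gamma_i = E[x_i^2]/Var[n_{\text{NS},i}\mid\tilde{\mathbf{H}}]$. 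Everything reduces to computing the conditional noise variance given the post-projection channel.

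First, I will characterize $\tilde{\mathbf{n}} = \mathbf{P}\mathbf{n}$. Since $\mathbf{n}\sim\mathcal{N}(\mathbf{0},\sigma_n^2\mathbf{I})$ in the real model, with $\sigma_n^2$ denoting the per-real-component variance, $\tilde{\mathbf{n}}$ is zero-mean Gaussian with covariance $\sigma_n^2\mathbf{P}\mathbf{P}^T = \sigma_n^2\mathbf{P}$. Here I use the fact that $\mathbf{P}$ is symmetric and idempotent as an orthogonal projector.

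Second, conditioned on $\tilde{\mathbf{H}}$, I will compute
\begin{equation}
	Var[\tilde{\mathbf{h}}_i^T\tilde{\mathbf{n}}] = \sigma_n^2\,\tilde{\mathbf{h}}_i^T\mathbf{P}\tilde{\mathbf{h}}_i = \sigma_n^2\,\mathbf{h}_i^T\mathbf{P}^T\mathbf{P}\mathbf{P}\mathbf{h}_i = \sigma_n^2\,\tilde{\mathbf{h}}_i^T\tilde{\mathbf{h}}_i = \sigma_n^2\,\tilde{g}_{i,i}.
\end{equation}
The key simplification is that $\tilde{\mathbf{h}}_i$ already lies in the range of $\mathbf{P}$, so $\mathbf{P}\tilde{\mathbf{h}}_i=\tilde{\mathbf{h}}_i$ and the projection of the noise causes no loss beyond what is already encoded in $\tilde{g}_{i,i}$. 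It follows that $Var[n_{\text{NS},i}\mid\tilde{\mathbf{H}}] = \sigma_n^2\tilde{g}_{i,i}/\tilde{g}_{i,i}^2 = \sigma_n^2/\tilde{g}_{i,i}$. Hence
\begin{equation}
	\gamma_i = \frac{E[x_i^2]\,\tilde{g}_{i,i}}{\sigma_n^2} = \gamma_0\tilde{g}_{i,i}.
\end{equation}

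The main obstacle is justifying the decoupled model itself. The claim rests on Lemma~\ref{enoise}, which is only an asymptotic approximation in the high-SNR, $N\gg K$ regime, and on the implicit assumption that $\mathbf{w}$ and $\tilde{\mathbf{n}}$ can be treated separately. In principle $\mathbf{w}$ depends on the noise through the ADMM output. I would handle this by stating that the SNR is computed under the approximation \eqref{xns_new}, so that the residual interference term $\mathbf{w}-\tilde{\mathbf{D}}^{-1}\odot(\tilde{\mathbf{G}}\mathbf{w})$ is negligible by \eqref{bivar}.

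A further point to check is notational consistency between the real-valued noise variance and the paper's $\sigma_n^2$. Any factor of $\tfrac12$ from the complex-to-real conversion can be absorbed into the definition of $\gamma_0$. That absorption is harmless for the diversity analysis that follows, which depends only on the distribution of $\tilde{g}_{i,i}$ from \eqref{gii}.
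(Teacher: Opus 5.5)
Your proposal is correct and follows essentially the same route as the paper. Both compute the conditional variance of $\tilde{\mathbf{h}}_i^T\tilde{\mathbf{n}}$ via $E[\tilde{\mathbf{n}}\tilde{\mathbf{n}}^T]=\sigma_n^2\mathbf{P}$ and $\mathbf{P}\tilde{\mathbf{h}}_i=\tilde{\mathbf{h}}_i$, obtain the per-element noise power $\sigma_n^2/\tilde{g}_{i,i}$, and divide $E[x_i^2]$ by it (the paper's last line writes $E[x_i^2]/\sigma_i$ where it means $\sigma_i^2$, which your version states correctly).
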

\begin{proof}
We begin by analyzing the equivalent noise term in (\ref{equnoise}), where the $i$-th component is expressed as
\begin{equation}
	\tilde{n}_i=\frac{\tilde{\mathbf{h}}_i^T \tilde{\mathbf{n}}}{\tilde{g}_{i, i}}.
\end{equation}
In this way, the corresponding noise power can be derived as
\begin{align}
	\sigma_i^2&=\frac{E\left[(\tilde{\mathbf{h}}_i^T \tilde{\mathbf{n}})^2\right]}{\tilde{g}_{i, i}^2}\nonumber\\
	&=\frac{E[\mathrm{Tr}(\tilde{\mathbf{h}}_i^T \tilde{\mathbf{n}}\tilde{\mathbf{n}}^T \tilde{\mathbf{h}}_i )]}{{\tilde{g}_{i, i}^2}}\nonumber\\
	&=\frac{\mathrm{Tr}( \tilde{\mathbf{h}}_i^T E[\tilde{\mathbf{n}}\tilde{\mathbf{n}}^T] \tilde{\mathbf{h}}_i)}{\tilde{g}_{i, i}^2}\nonumber\\
	&= \frac{\mathrm{Tr}( \tilde{\mathbf{h}}_i^T \sigma_n^2 \mathbf{P} \tilde{\mathbf{h}}_i)}{\tilde{g}_{i, i}^2}\nonumber\\
	&= \frac{\sigma_n^2 \mathrm{Tr}(\tilde{\mathbf{h}}_i^T\tilde{\mathbf{h}}_i)}{\tilde{g}_{i, i}^2}\nonumber\\
	&= \frac{\sigma_n^2}{\tilde{g}_{i, i}}.
\end{align}
Consequently, we obtain the post-processing SNR for the $i$-th element as follows
\begin{equation}
	\gamma_i = \frac{E[x_i^2]}{\sigma_i}=\gamma_0 \tilde{g}_{i, i},
	\label{gammai}
\end{equation}
thus completing the proof.
\end{proof}
\begin{theorem}
The post-processing SNR for the $i$-th element in the P-NS-ADMM algorithm follows a weighted Chi-square distribution, expressed as
\begin{equation}
		f(\gamma_i) =\frac{e^{-\frac{\gamma_i}{\gamma_0(1-M/N)}}}{\Gamma(\frac{N}{2})\gamma_0(1-M/N)}\!\left( \frac{\gamma_i}{\gamma_0(1-M/N)}\right)^{\frac{N}{2}-1}\!\!,
\end{equation}
\end{theorem}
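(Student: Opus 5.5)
The plan is to combine the preceding lemma, which gives $\gamma_i=\gamma_0\tilde{g}_{i,i}$, with the distribution of $\tilde{g}_{i,i}$ in \eqref{gii}, and then change variables. Under Lemma~1, the entries of the $i$-th column $\tilde{\mathbf{h}}_i$ are i.i.d. $\mathcal{N}\!\left(0,\tfrac{1}{2}(1-\tfrac{M}{N})\right)$. So $\tilde{g}_{i,i}=\sum_{n=1}^N\tilde{h}_{n,i}^2$ is a scaled chi-square variable with $N$ degrees of freedom:
\begin{equation*}
\tilde{g}_{i,i}=\tfrac{1}{2}\left(1-\tfrac{M}{N}\right)\chi^2_N .
\end{equation*}
Equivalently, $\tilde{g}_{i,i}\sim\Gamma\!\left(\tfrac{N}{2},\,1-\tfrac{M}{N}\right)$ in shape/scale form, which is what \eqref{gii} records.

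I would first restate that density explicitly:
\begin{equation*}
f_{\tilde{g}}(g)=\frac{g^{\frac{N}{2}-1}e^{-g/(1-M/N)}}{\Gamma(\frac{N}{2})(1-M/N)^{N/2}},\qquad g\ge 0 .
\end{equation*}
To justify it, I would note that each $\tilde{h}_{n,i}^2$ is $\Gamma(\tfrac12,1-\tfrac{M}{N})$. Sums of independent Gamma variables with a common scale add their shapes, so the $N$ terms give shape $\tfrac{N}{2}$. This is where the independence across rows from Lemma~1 is used, together with the approximation that the row covariance is $\approx 0$ when $N\gg M$.

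Next, conditioned on the realization used in the preceding lemma, $\gamma_i=\gamma_0\tilde{g}_{i,i}$ is a deterministic linear map with $\gamma_0>0$ fixed. The change of variables $f(\gamma_i)=\frac{1}{\gamma_0}f_{\tilde{g}}(\gamma_i/\gamma_0)$ then gives
\begin{equation*}
f(\gamma_i)=\frac{1}{\Gamma(\frac{N}{2})\,\gamma_0(1-M/N)}\left(\frac{\gamma_i}{\gamma_0(1-M/N)}\right)^{\frac{N}{2}-1}e^{-\frac{\gamma_i}{\gamma_0(1-M/N)}} .
\end{equation*}
Here I would group the power $(1-M/N)^{N/2}\gamma_0^{N/2}$ in the denominator as $(\gamma_0(1-M/N))^{N/2-1}\cdot\gamma_0(1-M/N)$. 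This is exactly the claimed expression, i.e. a chi-square law with $N$ degrees of freedom weighted by $\gamma_0(1-M/N)/2$.

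The main obstacle is not the calculus but the justification that $\tilde{g}_{i,i}$ is exactly Gamma. Since $\tilde{\mathbf{h}}_i=\mathbf{P}\mathbf{h}_i$ with $\mathbf{P}$ a rank-$(N-M)$ projector, the exact law is really $\tfrac12\chi^2_{N-M}$. Lemma~1 replaces this with $N$ i.i.d. entries of reduced variance, and I would explicitly invoke that lemma's asymptotic regime ($N\to\infty$, $N\gg M$) to accept the stated form. I would also remark that the two laws share the same mean $\tfrac{N-M}{2}$, so the diversity analysis that follows is insensitive to this approximation.
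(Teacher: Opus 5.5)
Your argument follows the paper's proof. You take $\gamma_i=\gamma_0\tilde g_{i,i}$ from the preceding lemma, use $\tilde g_{i,i}\sim\Gamma(\tfrac N2,1-\tfrac MN)$ from \eqref{gii} (which rests on Lemma~1's i.i.d.\ approximation), and apply a linear change of variables. The paper routes this through $v=2\tilde g_{i,i}/(1-M/N)\sim\chi^2(N)$ and $\gamma_i=\bar\gamma_0 v$, whereas you rescale the Gamma density directly; your algebra is correct. Your observation that the exact law is $\tfrac12\chi^2(N-M)$, because $\mathbf P$ is a rank-$(N-M)$ projector, is also correct.

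What needs fixing is your closing reassurance. It does not affect the derivation of the stated density, but it is wrong. Equal means do not make the diversity analysis insensitive to the approximation. Diversity order is governed by the shape parameter, i.e.\ how the density of $\tilde g_{i,i}$ behaves near zero, not by its mean.
\begin{itemize}
\item Under the exact law, $E[e^{-\gamma_0\tilde g_{i,i}}]=(1+\gamma_0)^{-(N-M)/2}=(1+\gamma_0)^{-(N_r-M_t)}$.
\item Under the approximate law, $E[e^{-\gamma_0\tilde g_{i,i}}]=(1+\gamma_0(1-M/N))^{-N/2}=(1+\gamma_0(1-M/N))^{-N_r}$, and this is exactly what produces the exponent $N_r$ in Theorem~\ref{full_diversity}.
\end{itemize}
So the approximation you flagged is precisely the step the subsequent diversity claim depends on. State the density as holding under Lemma~1's model, and drop the claim that the approximation is harmless downstream; in fact, reverse it.
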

\begin{proof}
Building upon the Gamma-distributed variable $\tilde{g}_{i, i}$ from (\ref{gii}), we introduce a transformed variable $v=\frac{2\tilde{g}_{i, i}}{1-M/N}$, which follows a Chi-square distribution 
\begin{equation}
	v \sim \chi^2(N)
\end{equation}
with $N$ degrees of freedom.
Consequently, PDF of $v$ is given by
\begin{equation}
	f(v)=\frac{e^{-\frac{v}{2}}}{2^{\frac{N}{2}}\Gamma(\frac{N}{2})}v^{\frac{N}{2}-1}.
\end{equation}
Substituting $\tilde{g}_{i, i}=\frac{v(1-M/N)}{2}$ into  (\ref{gammai}), $\gamma_i$ can be expressed as
\begin{equation}
	\gamma_i = \bar{\gamma}_0 v,
\end{equation}
where $\bar{\gamma}_0=\gamma_0\frac{(1-M/N)}{2}$.
According to the variable transformation formula, the exact PDF of the post-processing SNR is derived as
\begin{align}
	f(\gamma_i)&=f_v\left(\frac{\gamma_i}{\bar{\gamma}_0}\right) \times \left(\frac{d}{d\gamma_i}\left(\frac{\gamma_i}{\bar{\gamma}_0}\right)\right) \nonumber\\
	&= f_v\left(\frac{\gamma_i}{\bar{\gamma}_0}\right) \times \frac{1}{\bar{\gamma}_0} \nonumber \\
	&= \frac{e^{-\frac{\gamma_i}{2\bar{\gamma}_0}}}{2^{\frac{N}{2}}\Gamma(\frac{N}{2})}\left(\frac{\gamma_i}{\bar{\gamma}_0}\right)^{\frac{N}{2}-1}\nonumber\\
	&=\frac{e^{-\frac{\gamma_i}{\gamma_0(1-M/N)}}}{\Gamma(\frac{N}{2})\gamma_0(1\!-\! M/N)}\left( \frac{\gamma_i}{\gamma_0(1 \!-\! M/N)}\right)^{\frac{N}{2}-1}.
\end{align}
\end{proof}
This demonstrates that the post-processing SNR $\gamma_i$ follows a weighted Chi-square distribution with $N$ degrees of freedom.
Consequently, we establish the following Theorem, proving that P-NS-ADMM can achieve the full received diversity order.

\begin{theorem}
\label{full_diversity}
The achievable received diversity order of the P-NS-ADMM algorithm is
\begin{equation}
	d_{\mathrm{NS}}=\lim_{\sigma_n^2 \to 0}\frac{\mathrm{log}\left(\bar{P}_{e, i}\right)}{\mathrm{log}(\sigma_n^2)}=N_r.
\end{equation}
\end{theorem}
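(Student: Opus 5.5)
The plan is to average the conditional error probability of each element over the post-processing SNR distribution from the preceding theorem, and then read off the exponent of $\sigma_n^2$ at high SNR. By \eqref{xns_new}, the $i$-th entry of $\mathbf{x}_{\text{NS}}$ is the true symbol $x_i$ plus the scalar Gaussian noise $n_{\text{NS},i}$ of variance $\sigma_n^2/\tilde{g}_{i,i}$ (the lemma on post-processing SNR). Conditioned on $\tilde{g}_{i,i}$, the error after quantization onto $\mathcal{X}$ is therefore a standard ASK decision error. Its probability is bounded above by $c\,Q(\sqrt{\kappa\gamma_i})$ for constants $c,\kappa>0$ that depend only on $L$. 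It is bounded below by a similar expression, since an inner constellation point errs whenever $|n_{\text{NS},i}|>1$. Applying the Chernoff bound $Q(u)\le \tfrac12 e^{-u^2/2}$ then gives
\begin{equation*}
P_{e,i}(\gamma_i)\le \tfrac{c}{2}\exp\!\left(-\tfrac{\kappa}{2}\gamma_i\right).
\end{equation*}

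Next, I will average this bound over $f(\gamma_i)$ from the preceding theorem, which is a Gamma law with shape $N/2$ and scale $\gamma_0(1-M/N)$. This is the moment generating function of that Gamma law evaluated at $-\kappa/2$, so
\begin{equation*}
\bar P_{e,i}\le \tfrac{c}{2}\left(1+\tfrac{\kappa}{2}\gamma_0(1-\tfrac{M}{N})\right)^{-N/2}.
\end{equation*}
Since $\gamma_0=E[x_i^2]/\sigma_n^2$ and $1-M/N$ is a fixed positive constant (because $N\gg M$), this upper bound behaves like $(\sigma_n^2)^{N/2}$ as $\sigma_n^2\to 0$. For the matching lower bound, I will use the standard estimate $Q(u)\ge$ const$\cdot e^{-u^2}$ for bounded $u$, or alternatively restrict the integral to the region $\gamma_i\le 1$. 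The density near the origin behaves like $\gamma_i^{N/2-1}/\gamma_0^{N/2}$, so this region already contributes on the order of $\gamma_0^{-N/2}$. The upper and lower bounds together give $\log\bar P_{e,i}/\log\sigma_n^2\to N/2$.

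Finally, because $N=2N_r$, we obtain $d_{\mathrm{NS}}=N/2=N_r$. This equals the diversity order of ML detection for $\tilde{\mathbf H}$ with $N_r$ complex receive dimensions, which gives the claimed full received diversity.

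The main obstacle is justifying that the residual term $\mathbf{w}-\tilde{\mathbf{D}}^{-1}\odot(\tilde{\mathbf{G}}\mathbf{w})$ does not degrade the exponent. Lemma~\ref{enoise} treats this term as vanishing only in the high-SNR, $N\gg K$ regime. I would rely on that lemma, and note that any leftover interference has variance $O(K/N)$. Such a term can be absorbed into the constants $c,\kappa$, which changes only the coding gain and not the slope with respect to $\log\sigma_n^2$.
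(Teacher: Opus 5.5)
\textbf{Same upper bound on $\bar P_{e,i}$.} Your bound on $\bar P_{e,i}$ (the half that gives $d_{\mathrm{NS}}\ge N_r$) is the paper's argument made precise. The paper uses $P_{e,i}\approx N_eQ(\sqrt{2\gamma_i})\le\tfrac12N_ee^{-\gamma_i}$ and then states $\bar P_{e,i}\le\tfrac{N_e}{2}\bigl(1+\gamma_0(1-\tfrac{M}{N})\bigr)^{-N_r}$. That is exactly your Gamma-MGF evaluation with $\kappa=2$ and shape $N/2=N_r$. The paper attributes the exponent loosely to ``fading across $N_r$ links''; your MGF calculation is the precise version of that step.

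\textbf{Different converse.} The paper closes the sandwich with $d_{\mathrm{NS}}\le d_{\mathrm{ML}}=N_r$, importing ML optimality and the value of the ML diversity. You instead lower-bound $\bar P_{e,i}$ directly. On $\{\gamma_i\le 1\}$ the conditional error is at least the constant $Q(\sqrt{\kappa'})$, and the density from the preceding theorem gives $\Pr(\gamma_i\le 1)\asymp\gamma_0^{-N/2}$.

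Your route is self-contained, needs no external fact about ML, and identifies the exponent of the algorithm's own error probability as the Gamma shape parameter of $\tilde g_{i,i}$. The paper's route is a one-liner. That transparency cuts both ways. Both proofs inherit the i.i.d.\ model of Lemma~1 verbatim. Strictly, $\tilde g_{i,i}=\mathbf h_i^T\mathbf P\mathbf h_i\sim\tfrac12\chi^2_{N-M}$, because $\mathbf P$ has rank $N-M$. Your converse applied to that exact law would return $N_r-M_t$, whereas the paper's appeal to $d_{\mathrm{ML}}=N_r$ does not expose this.

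\textbf{Absorption remark.} One remark in your final paragraph is wrong and should be dropped, though it is not load-bearing. The leftover term $-\tilde g_{i,i}^{-1}\sum_{j\neq i}\tilde g_{i,j}w_j$ does not scale with $\sigma_n^2$. It therefore cannot be absorbed into $\kappa$, which multiplies $\gamma_i\propto 1/\sigma_n^2$. If it survived with non-vanishing probability, it would produce an SNR-independent error floor (zero diversity), not merely a coding-gain loss. The conclusion genuinely needs Lemma~\ref{enoise}'s claim that this term vanishes at high SNR, which is how the paper uses it. So rely on that lemma alone.
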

\begin{proof}
Under the assumption of independent ML decoding, the PER for the $i$-th element can be approximated as \cite{paulraj2003introduction}
\begin{equation}
	P_{e, i} \approx N_e Q(\sqrt{2\gamma_i})\leq \frac{1}{2}N_e e^{-\gamma_i},
\end{equation}
where $N_e$ represents the average number of nearest-neighbor constellation points for the $i$-th stream, with the inequality derived from the Chernoff bound, and $Q(\cdot)$ is the standard $Q$-function.

Given that the channel experiences fading across the $N_r$ correlated links, the received diversity order is determined by the expectation of the PER (i.e., $\bar{P}_{e, i}=E[P_{e, i}]$) \cite{gore2002transmit}, which can be expressed as
\begin{equation}
	\bar{P}_{e, i} \leq \frac{N_e}{2}\left(1 + \gamma_0\left(1-\frac{M}{N}\right)\right)^{-N_r}.
\end{equation}
Thus, the received diversity order of the P-NS-ADMM algorithm is lower-bounded by
\begin{align}
	d_{\mathrm{NS}}&=\lim_{\sigma_n^2 \to 0}\frac{\mathrm{log}\left(\bar{P}_{e, i}\right)}{\mathrm{log}(\sigma_n^2)}\nonumber\\
	&\geq \lim_{\sigma_n^2 \to 0} \frac{\mathrm{log}\left(\frac{N_e}{2}\right) \!-\! N_r \mathrm{log} \left(1 \!+\! \frac{E[x_i^2]}{\sigma^2_n}\left(1 \! - \!\frac{M}{N}\right)\right)}{\mathrm{log}(\sigma_n^2)}\nonumber\\
	&\approx  \lim_{\sigma_n^2 \to 0} \frac{\mathrm{log}\left(\frac{N_e}{2}\right) \!-\! N_r \mathrm{log} (E[x_i^2]) \!-\! N_r \mathrm{log} \left(1 \!- \!\frac{M}{N}\right)  }{\mathrm{log}(\sigma_n^2)} \!+\! N_r\nonumber\\
	&=N_r.
	\label{lowerbound}
\end{align}
Since ML detection provides the  performance upper bound, the achievable received diversity order of P-NS-ADMM satisfies
\begin{equation}
	d_{\text{NS}} \leq d_{\text{ML}} = N_r,
\end{equation}
which coincides with the lower bound in (\ref{lowerbound}), resulting in  $d_{\text{NS}}= N_r$.
\end{proof}
As can be seen clearly, the proposed P-NS-ADMM algorithm attains full received diversity order, demonstrating its ability to asymptotically achieve the optimal ML detection performance as SNR increases.
Moreover, it is worth noting that the last equation in \eqref{lowerbound} holds only if $N \gg M$. 
Consequently, a performance degradation becomes inevitable as $M$ approaches $N$.

\section{The Proposed Iteration-based NS-aided ADMM Detection Scheme}
\label{secvi}

We find that the projection operation in P-NS-ADMM is computationally expensive.
To this end, we propose an iteration-based detection scheme that avoids the projection operation, thereby significantly reducing the complexity.

We note that MIMO-ISAC signal detection comprises two subproblems: communication detection and sensing estimation.
The NS-aided ADMM approach is suitable for this problem, as it can iteratively solve these sub-problems while updating parameters alternately.
Specifically, unlike the projection-based scheme, here we directly introduce auxiliary variable $\mathbf{z}$ in \eqref{MILS} and relax it to the box constraints.
The corresponding problem can be expressed as
\begin{align}
	\underset {\mathbf{x} \in \mathbb{R}^{K}, \mathbf{s} \in \mathbb{R}^{M}, \mathbf{z} \in \mathcal{R}^{K}_{\mathcal{X}}} {\text{ min}} \quad & \| \mathbf{y} - \mathbf{Hx} -\mathbf{As} \|^2 , \nonumber \\
	\text{s.t.} \quad & \mathbf{x} = \mathbf{z}.
	\label{relax2}
\end{align}
Similarly, by solving the optimization problem \eqref{relax2}, the update of $\mathbf{x}$ and $\mathbf{s}$ are given by
\begin{equation}
	\label{xupdate2}
	\hspace{-0.2cm}
	\mathbf{x}^{k}\!=\!(\mathbf{H}^T\mathbf{H}\!+\!\rho\mathbf{I})^{-1}\!\left(\mathbf{
		H}^T\mathbf{y}\!-\!\mathbf{H}^T\mathbf{As}^{k-1} \!\!+\! \rho(\mathbf{z}^{k-1} \!-\! \bm{\lambda}^{k-1})\right)\!,
\end{equation}
\begin{equation}
	\label{supdate}
	\mathbf{s}^{k}=(\mathbf{A}^T\mathbf{A})^{-1}\mathbf{A}^T(\mathbf{y}-\mathbf{Hx}^{k}).
\end{equation}
Since the constraints in \eqref{relax2} are independent of $\mathbf{s}$, the update expression of $\mathbf{z}$ and $\bm{\lambda}$ remain the same, i.e., \eqref{zupdate} and \eqref{lamopt}.
As for the NS mechanism, it still works here by simply replacing $\mathbf{y}$ in \eqref{costfunction} with $\mathbf{y}-\mathbf{As}$.
Meanwhile, to speed up the convergence of the ADMM algorithm, $\mathbf{z}$ and $\bm{\lambda}$ are updated twice in a single iteration.
Finally, the detection results $\hat{\mathbf{x}}$ and $\hat{\mathbf{s}}$ are output when the algorithm reaches the maximum number of iterations.
To summarize, the proposed iteration-based NS-aided ADMM (I-NS-ADMM) detection is outlined in Algorithm~\ref{I-NS-ADMM}.

\begin{algorithm}[t]
	\vspace{0.2cm}
	\renewcommand{\algorithmicrequire}{\textbf{Input:}}
	\renewcommand{\algorithmicensure}{\textbf{Output:}}
	\caption{Iteration-based NS-aided ADMM (I-NS-ADMM) Detection for MIMO-ISAC Systems}
	\label{I-NS-ADMM}
	\begin{algorithmic}[1]
		\REQUIRE ${\bf{y}}$,  ${\bf{A}}$, ${\bf{H}}$,  $T_{\mathrm{max}}$,  $\rho$
		\ENSURE communication signals $\hat{\bf{x}}$, sensing signals $\hat{\mathbf{s}}$
		\STATE \textbf{Initialize:} $\mathbf{s}^0 = {\mathbf 0}$, $\mathbf{z}^0 = {\mathbf 0}$, ${\bm{\lambda}}^0 = {\bf 0}$
		\FOR {$k=1, 2, ..., T_{\text{max}}$}
		\STATE update ${{\bf{x}}^{k}}$ via (\ref{xupdate2}), 
		\STATE update ${{\bf{s}}^{k}}$ via (\ref{supdate}), 
		\STATE update $\mathbf{z}^{k}$ via (\ref{zupdate})
		\STATE update $\bm{\lambda }^{k}$ via (\ref{lamopt})
		\STATE set NS initial solution $\mathbf{x}^0 = \lceil \mathbf{x}^{k}\rfloor_\mathcal{Q} \in \mathcal{X}^{K}$
		\STATE compute $\mathbf{D} = \text{diag}(\mathbf{G})$
		\STATE compute $\mathbf{e} = \mathbf{y} -\mathbf{As}- \mathbf{G}\mathbf{x}^0$
		\STATE compute $\mathbf{x}_{\mathrm{NS}} = \mathbf{x}^0 + \mathbf{D}^{-1} \odot \mathbf{e}$
		\STATE compute $\mathbf{x}^{k}=\lceil \mathbf{x}_{\text{NS}}\rfloor_\mathcal{Q}\in \mathcal{X}^{K}$
		\STATE update $\mathbf{z}^{k}$ via (\ref{zupdate})
		\STATE update $\bm{\lambda }^{k}$ via (\ref{lamopt})
		\ENDFOR
		\STATE output $\hat{\mathbf{x}} = \mathbf{x}^{T_{\mathrm{max}}}$
		\STATE output $\hat{\mathbf{s}}=\mathbf{s}^{T_{\mathrm{max}}}$ 
	\end{algorithmic}
	\vspace{-0.1cm}
\end{algorithm}

From Algorithm~\ref{I-NS-ADMM}, it is clear that the estimation of $\mathbf{s}$ is determined by ADMM iterations.
However, in systems where $N$ is close to $M+K$, due to the slow convergence of ADMM, an accurate estimation of $\mathbf{s}$ may not be readily obtained with only a single ADMM iteration, which subsequently degrades the communication detection performance.
To mitigate this issue, we introduce a flexible ADMM iteration mechanism that improves the estimation of $\mathbf{s}$ by increasing iteration counts.
Specifically, as depicted in Fig.~\ref{flexible}, the mechanism performs NS every $I$ times ADMM iterations, repeating the process up to $P$ times.
On the other hand, due to the absence of the projection operation, mutual interference between S\&C signals is inevitable, resulting in a slight performance loss, as demonstrated by the simulation results.
\begin{figure*}[t]
	\centering
	\includegraphics[width=0.73\textwidth]{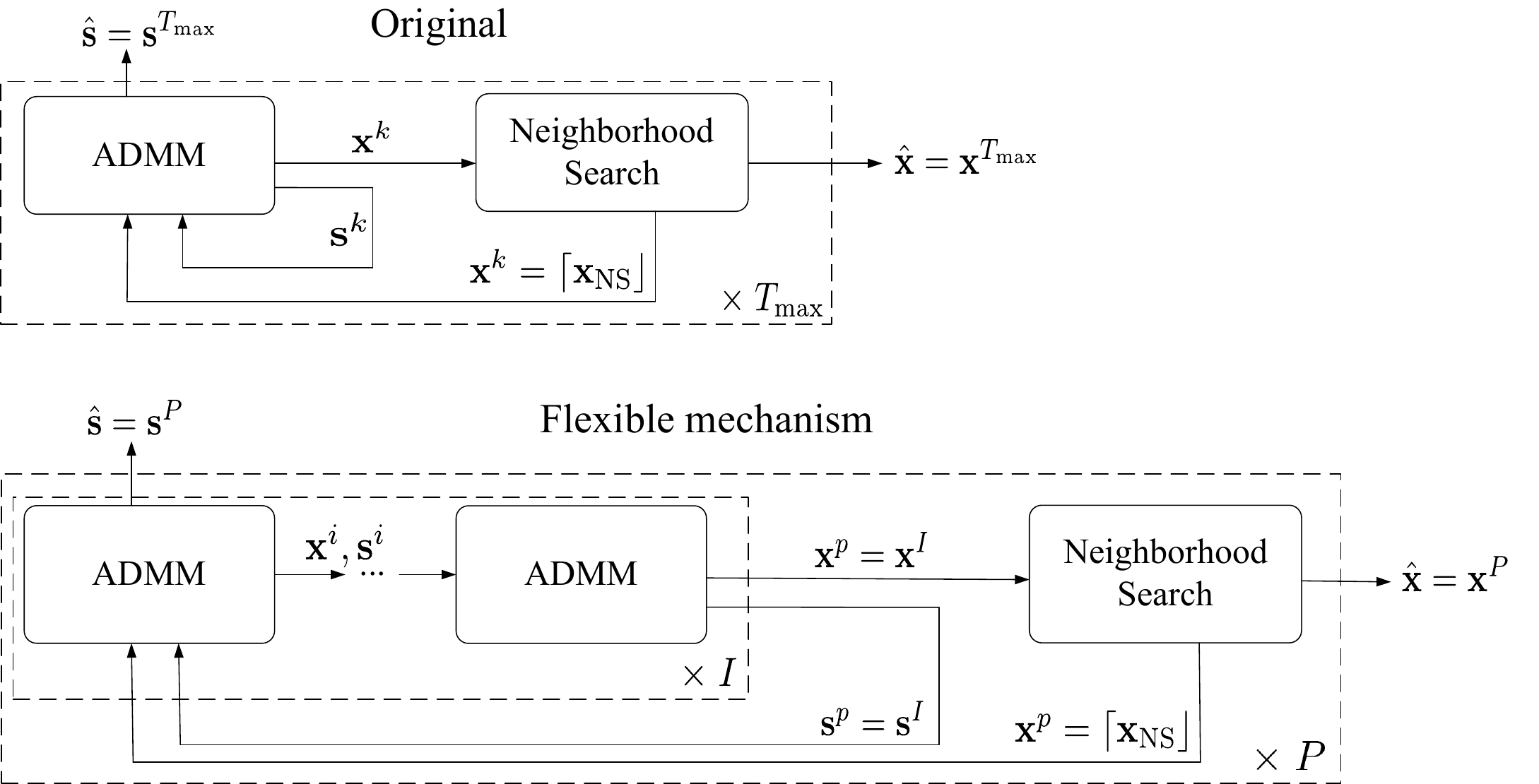}
	\caption{The flexible mechanism of the ADMM iterations.}
	\label{flexible}
\end{figure*}

To visually present the complexity advantages of the iteration-based scheme, we analyze the overall computational complexity of I-NS-ADMM, which consists of preprocessing and iteration.
In the preprocessing part, the I-NS-ADMM algorithm also requires the calculation of $(\mathbf{H}^T\mathbf{H}+\rho \mathbf{I})^{-1}$, $\mathbf{H}^{T}\mathbf{y}$, and $(\mathbf{A}^{T}\mathbf{A})^{-1}\mathbf{A}^{T}$.
Moreover, an additional computation of $\mathbf{H}^{T}\mathbf{A}$ is required, which involves $NMK$ real multiplications.
However, the main difference is that the I-NS-ADMM algorithm has no step to compute the projection matrix and the post-projection variants, significantly reducing the complexity.
In the iteration part, the computational burdens for determining $\mathbf{x}^{k}$, $\mathbf{s}^{k}$, and $\mathbf{z}^{k}$ in each iteration are noted as $K^2 + KM + K$, $NM + NK$, and $2K$.
Since NS is independent of the sensing signals, the complexity here is also $K^2 + 2K$.

To summarize, the overall computational complexity is illustrated in Table~\ref{complexity}.
Here, the prefix 'I-' denotes the iteration-based schemes, which follow the alternating structure of Algorithm~\ref{I-NS-ADMM} but employ the solvers from \cite{shahabuddin2021admm} and \cite{zhang2022efficient} for the communication detection step.
Note that SDR is excluded from this framework as its specific structures are unsuitable for the alternating iterative updates.
Comparing the two schemes, the complexity of projection-based schemes is predominantly concentrated in the preprocessing stage, specifically $N^2K$ and $N^2M$.
Meanwhile, iteration-based schemes exhibit higher complexity in the iteration part, specifically $T_{\text{max}}NK$ and $T_{\text{max}}NM$, and $NKM$ in the preprocessing part.
Consequently, when $T_{\text{max}}<N-KM/(K+M)$, iteration-based schemes have significant complexity advantages.

\begin{table*}[t]
	\renewcommand\arraystretch{1.7}
	\begin{center}
		\caption{Computational Complexity Comparison of Different Algorithms}
		\label{complexity}
		\vspace{-0.1cm}
		{\begin{threeparttable}
				\begin{tabular}{|c|c|c|}
					\hline
					\textbf{Algorithm} & \textbf{Number of real multiplications without iteration part} & \textbf{Iteration / Solving part}\\
					\hline
					P-ADMIN & \multirow{3}{*}{$N^2K \!+\! N^2M \!+\! N^2 \!+\! NK^2 \!+\! 2NM^2 \!+\! 2NK \!+\! K^3 \!+\! KM \!+\! M^3$} & $T_{\text{max}}(K^2 \!+\! 2K)$\\
					\cline{1-1}\cline{3-3}
					P-PS-ADMM & & $T_{\text{max}}(K^2 \!+\! KQ^2 \!+\! KQ)$ \\
					\cline{1-1}\cline{3-3}
				P-SDR & & $\mathcal{O}(K^{3.5}) + T_{\text{rand}}NK$ \\
					\hline
					P-NS-ADMM & $N^2K \!+\! N^2M \!+\! N^2 \!+\! NK^2 \!+\! 2NM^2 \!+\! 2NK \!+\! K^3 \!+\! K^2 \!+\! KM \!+\! 2K \!+\! M^3\!$ & $T_{\text{max}}(K^2+ 2K)$\\
					\hline
					I-ADMIN& \multirow{3}{*}{$NK^2 \!+\! 2NM^2 \!+\! NKM \!+\! NK \!+\! K^3 \!+\! M^3$} & $T_{\text{max}}(NK \!+\! NM \!+\! K^2 \!+\! KM \!+\! 3K)$\\
					\cline{1-1}\cline{3-3}
					I-PS-ADMM & & $T_{\text{max}}(NK \!+\! NM \!+\! K^2 \!+\! KM \!+\! 2KQ^2 \!+\! KQ)$  \\
					\cline{1-1}\cline{3-3}
					I-NS-ADMM &  & $T_{\text{max}}(NK \!+\! NM \!+\! 2K^2 \!+\! KM \!+\! 5K)$\\
					\hline
				\end{tabular}
		\end{threeparttable}}
		\vspace{-0.2cm}
	\end{center}
\end{table*}

\section{Simulation}

In this section, numerical results are presented to evaluate the performance of the proposed NS-aided ADMM detection in uplink large-scale MIMO-ISAC systems.
We benchmark our proposals against the projection-based (P-ADMIN, P-PS-ADMM, P-SDR) and iteration-based (I-ADMIN, I-PS-ADMM) schemes introduced in Section~\ref{secivc} and Section~\ref{secvi}.
Note that we restrict our comparison to model-driven approaches to validate the proposed theoretical propositions, while data-driven deep learning methods are left for future investigation.
In terms of communication performance evaluation, we first compare the bit error rate (BER) performance.
Next, we analyze the computational complexity of all algorithms under different scenarios with varying iteration counts.
Finally, we contrast P-NS-ADMM with the ML detection to demonstrate the full received diversity order as established in Theorem~\ref{full_diversity}.
Regarding sensing performance, the normalized mean square error (NMSE), between the estimated reflection coefficient $\hat {\beta} $ and the real $\beta $ is used as a performance metric here.
We evaluate the NMSE performance under different scenarios and derive a theoretical benchmark as a reference for performance analysis.
Unless otherwise stated, the simulation setup follows the system model defined in Section~\ref{sec2}.
For brevity in the figure captions, we use the notation $N_r \times U \times M_t$ to denote a system configuration with $N_r$ BS antennas, $U$ single-antenna users, and $M_t$ single-antenna targets.
In addition, the power of the communication signals ${P_c}$ is set to $1$ and the power ratio between the communication signals and sensing signals ${P_c}/{P_s}$ is set to $-8~\mathrm{dB}$.

\vspace{-0.1cm}
Fig.~\ref{com_128_32_2_and_128_32_24} presents a comparison of communication detection performance in the $32 \times 8 \times 2$ and $32 \times 8 \times 8$ MIMO-ISAC systems with 16-QAM.
The number of ADMM iterations for all algorithms is set to $T_{\text{max}}=10$.
As demonstrated in the figure, the NS-aided algorithms exhibit superior BER performance.
In addition, the performance of all algorithms shows a consistent decline as the number of sensing targets grows, which is consistent with the analysis of \eqref{lowerbound}.

\begin{figure}[t]
	\centering
	\vspace{-0.3cm}
	\includegraphics[width=0.48\textwidth]{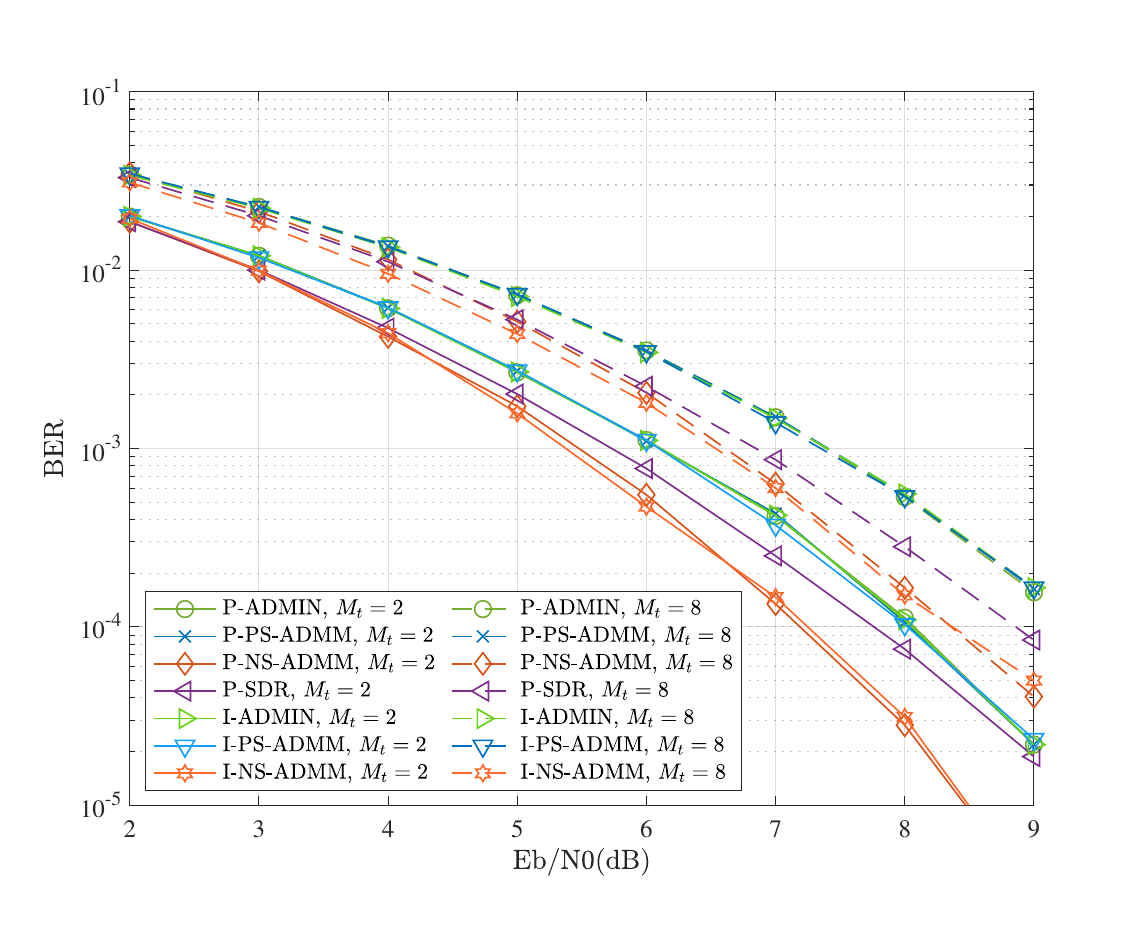}
	\vspace{-0.5cm}
	\caption{Comparisons of BER performance for 16-QAM in $32 \times 8 \times 2$ and $32 \times 8 \times 8$ MIMO-ISAC systems.}
	\label{com_128_32_2_and_128_32_24}
	\vspace{-0.3cm}
\end{figure}

Fig.~\ref{com_64_60_2} extends the BER performance comparison to a $64 \times 60 \times 2$ MIMO-ISAC system with 16-QAM.
The number of ADMM iterations is set to $10$ and $20$, and the flexible mechanism's iteration parameter $I$ is set to $2$.
On the one hand, the iteration-based scheme outperforms the projection-based scheme at low SNR.
This is due to the fact that in this system, where $N_r$ is close to $M_t + U$, the post-projection channel matrix is more likely to turn into an ill-conditioned matrix, which affects communication detection.
Moreover, I-NS-ADMM with the flexible mechanism demonstrates better performance at low SNR compared with I-NS-ADMM.
On the other hand, since the iteration-based schemes are affected by sensing signals, the detection performance will reach the upper bound at high SNR.

\begin{figure}[t]
	\vspace{-0.3cm}
	\centering
	\includegraphics[width=0.48\textwidth]{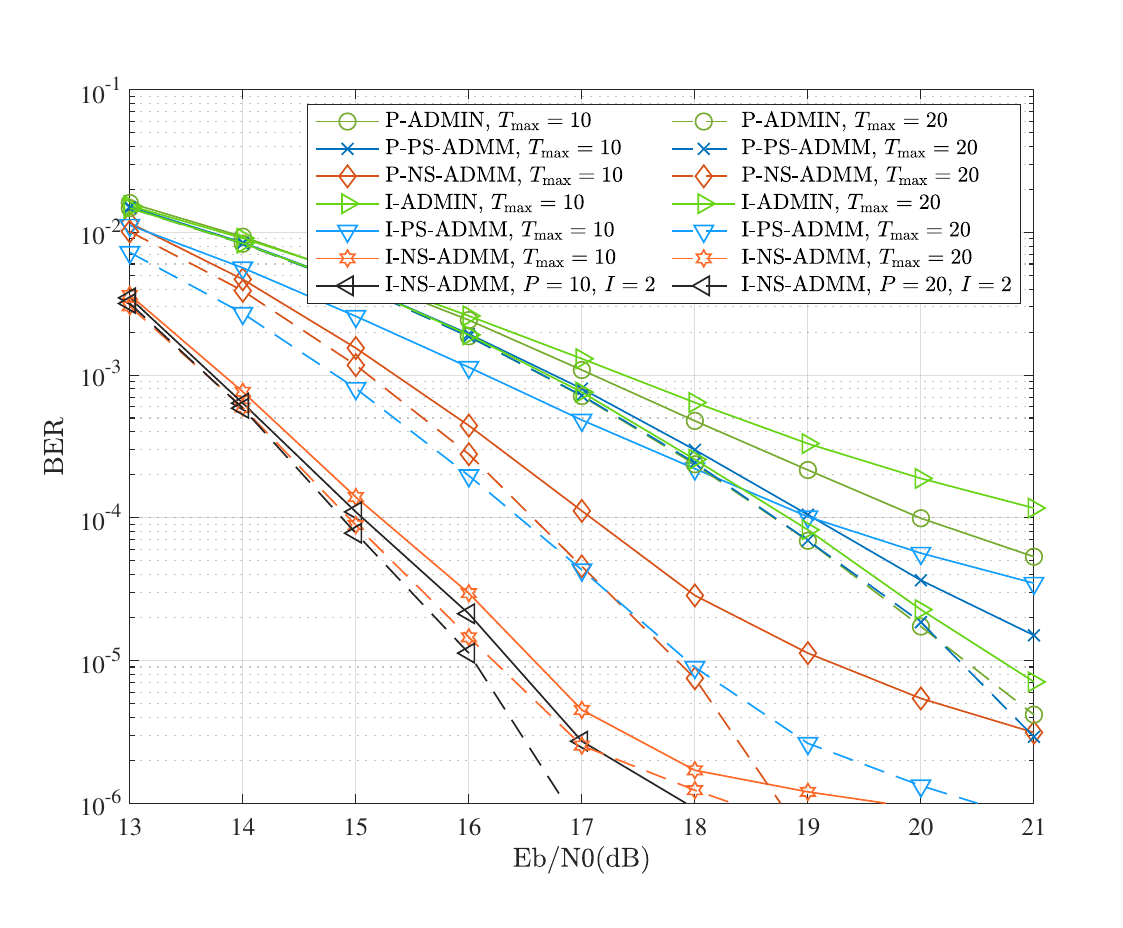}
	\vspace{-0.5cm}
	\caption{Comparisons of BER performance for 16-QAM in $64 \times 60 \times 2$ MIMO-ISAC system.}
	\vspace{-0.3cm}
	\label{com_64_60_2}
\end{figure}


As illustrated in Fig.~\ref{complexity_comparison}, the complexity comparison of P-PS-ADMM, P-NS-ADMM, I-PS-ADMM, and I-NS-ADMM is demonstrated for three scenarios in the previous simulation.
P-SDR is excluded from the plot as its complexity is orders of magnitude higher than the ADMM-based schemes.
Across all scenarios, the projection-based schemes consistently exhibit higher initial complexity due to the preprocessing overhead of matrix projection, while the iteration-based schemes start with significantly lower complexity but grow rapidly with iterations.
Given that the proposed algorithms typically require few iterations to attain upper performance, the iteration-based schemes demonstrate a clear computational advantage, particularly in scenarios where sensing interference is not severe.


\begin{figure}[t]
	\centering
	\vspace{-0.3cm}
	\includegraphics[width=0.48\textwidth]{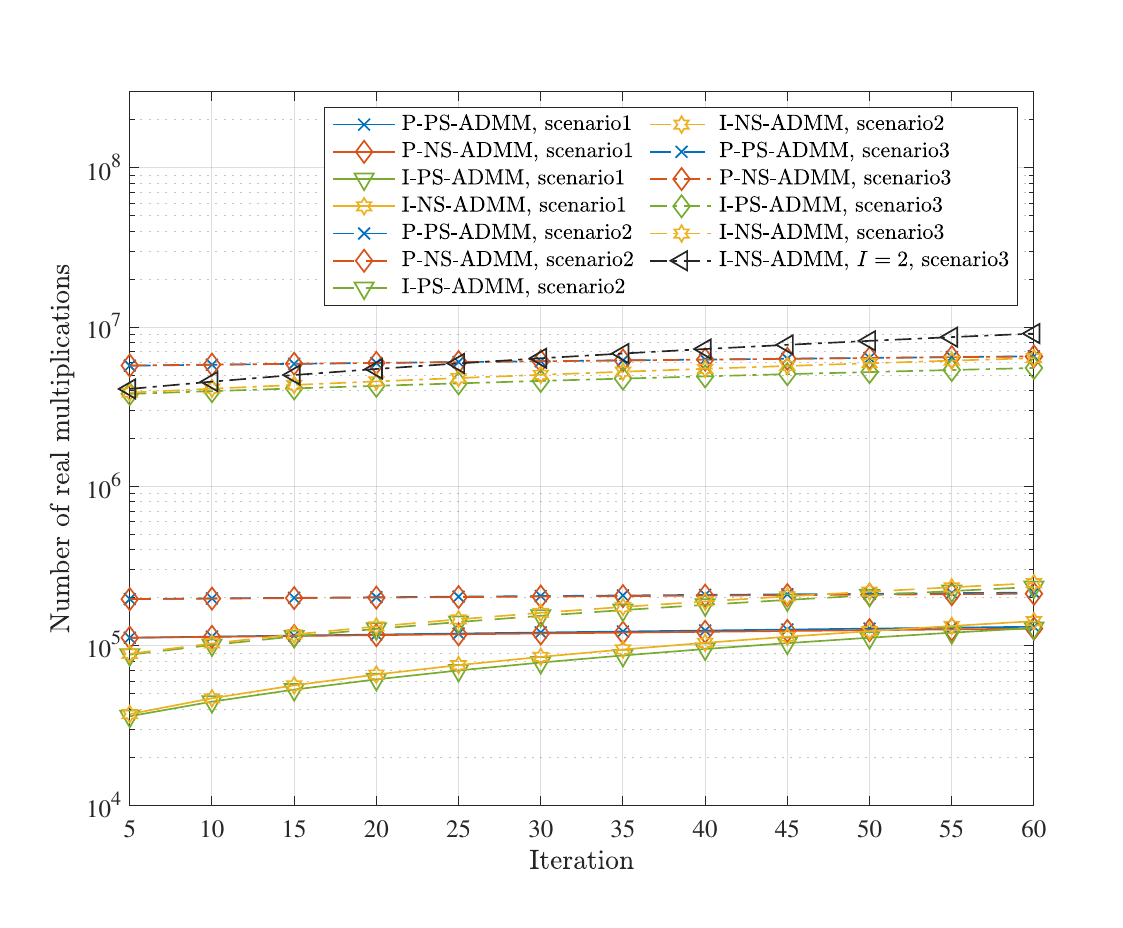}
	\vspace{-0.5cm}
	\caption{Comparisons of complexity in scenario $1$ ($32 \times 8 \times 2$), scenario $2$ ($32 \times 8 \times 8$), and scenario $3$ ($64 \times 60 \times 2$).}
	\vspace{-0.3cm}
	\label{complexity_comparison}
\end{figure}

Fig.~\ref{ml} compares the performance of the NS-aided schemes and P-SDR with ML detection (implemented via classic sphere decoding \cite{damen2003maximum}) across 4-QAM and 16-QAM modulation in a $128 \times 16 \times 2$ MIMO-ISAC system.
It can be observed that P-NS-ADMM achieves near-ML detection performance under both 4-QAM and 16-QAM modulation schemes, indicating that it exploits full received diversity, consistent with the result derived in Theorem~\ref{full_diversity}.
Furthermore, in the 16-QAM case, a noticeable performance gap exists between I-NS-ADMM and ML detection, confirming that I-NS-ADMM fails to achieve full received diversity order.

\begin{figure}[t]
	\vspace{-0.3cm}
	\centering
	\includegraphics[width=0.48\textwidth]{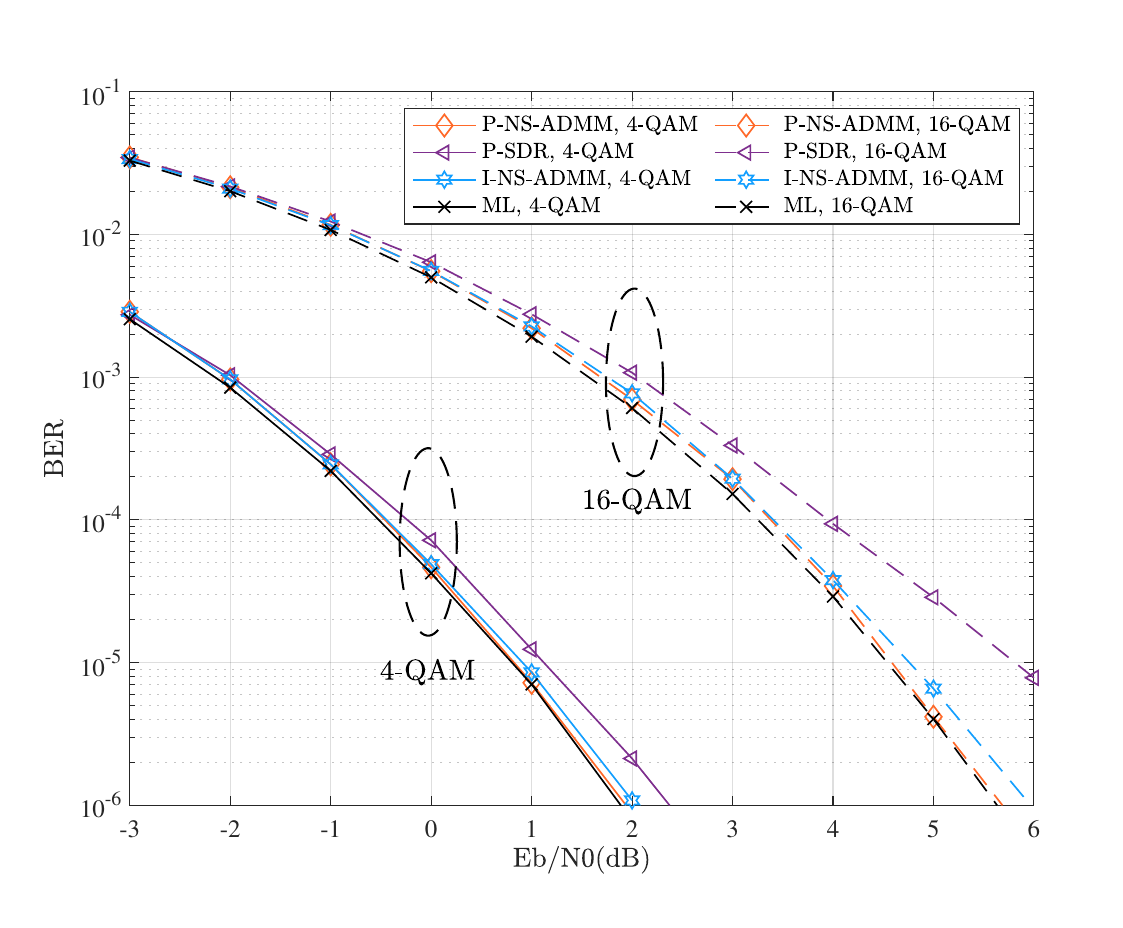}
	\vspace{-0.5cm}
	\caption{Comparisons of BER performance for 16-QAM and 4-QAM in $128 \times 16 \times 2$ MIMO-ISAC system.}
	\vspace{-0.3cm}
	\label{ml}
\end{figure}

Fig.~\ref{imperfect} compares the BER performance under an imperfect angle prior with a $1^{\circ}$ error.
As observed, both P-SDR and the proposed NS-aided schemes experience performance degradation due to the residual sensing interference caused by the mismatched projection or steering matrices.
Nevertheless, the proposed P-NS-ADMM and I-NS-ADMM schemes maintain a consistent performance advantage over P-SDR.
This indicates that the neighborhood search mechanism preserves its effectiveness in the presence of angle prior errors.

\begin{figure}[t]
\vspace{-0.3cm}
	\centering
	\includegraphics[width=0.48\textwidth]{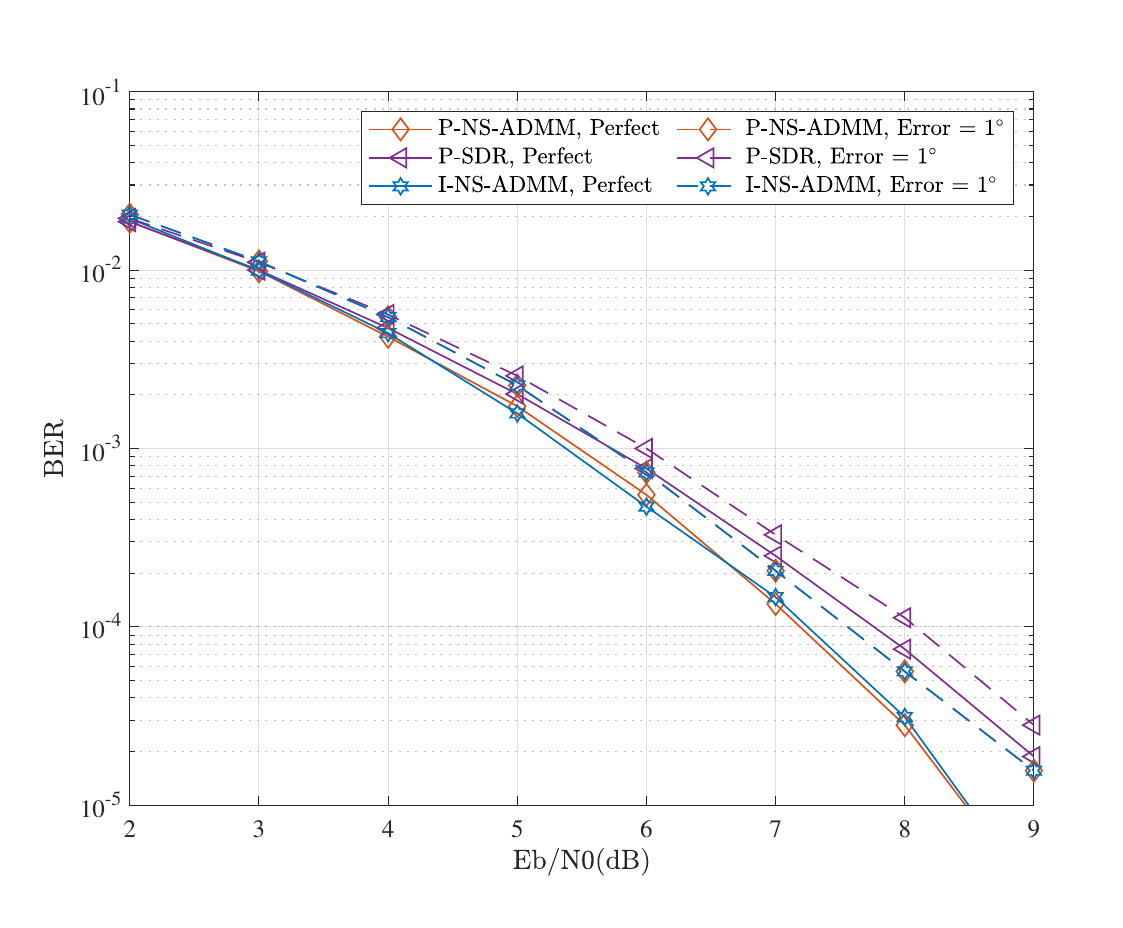}
	\vspace{-0.5cm}
	\caption{Comparisons of BER performance with perfect and imperfect angle prior information (error = $1^{\circ}$) in $32 \times 8 \times 2$ MIMO-ISAC system.}
	\vspace{-0.3cm}
	\label{imperfect}
\end{figure}

Fig.~\ref{sen_128_32} evaluates the sensing performance across 16-QAM modulation schemes in $32 \times 8 \times 2$ and $32 \times 8 \times 8$ MIMO-ISAC systems.
The sensing-only scheme without communication interference is used as a comparison.
It is evident that the estimation of the reflection coefficient is influenced by the performance of communication detection.
In this case, the proposed NS-aided schemes exhibit superior sensing performance compared with P-SDR and other ADMM-based schemes.
Meanwhile, the NMSE degrades with more targets (from $M_t=2$ to $M_t=8$) due to the reduced power allocation per target under a fixed total power constraint.

\begin{figure}[t]
	\vspace{-0.3cm}
	\centering
	\includegraphics[width=0.48\textwidth]{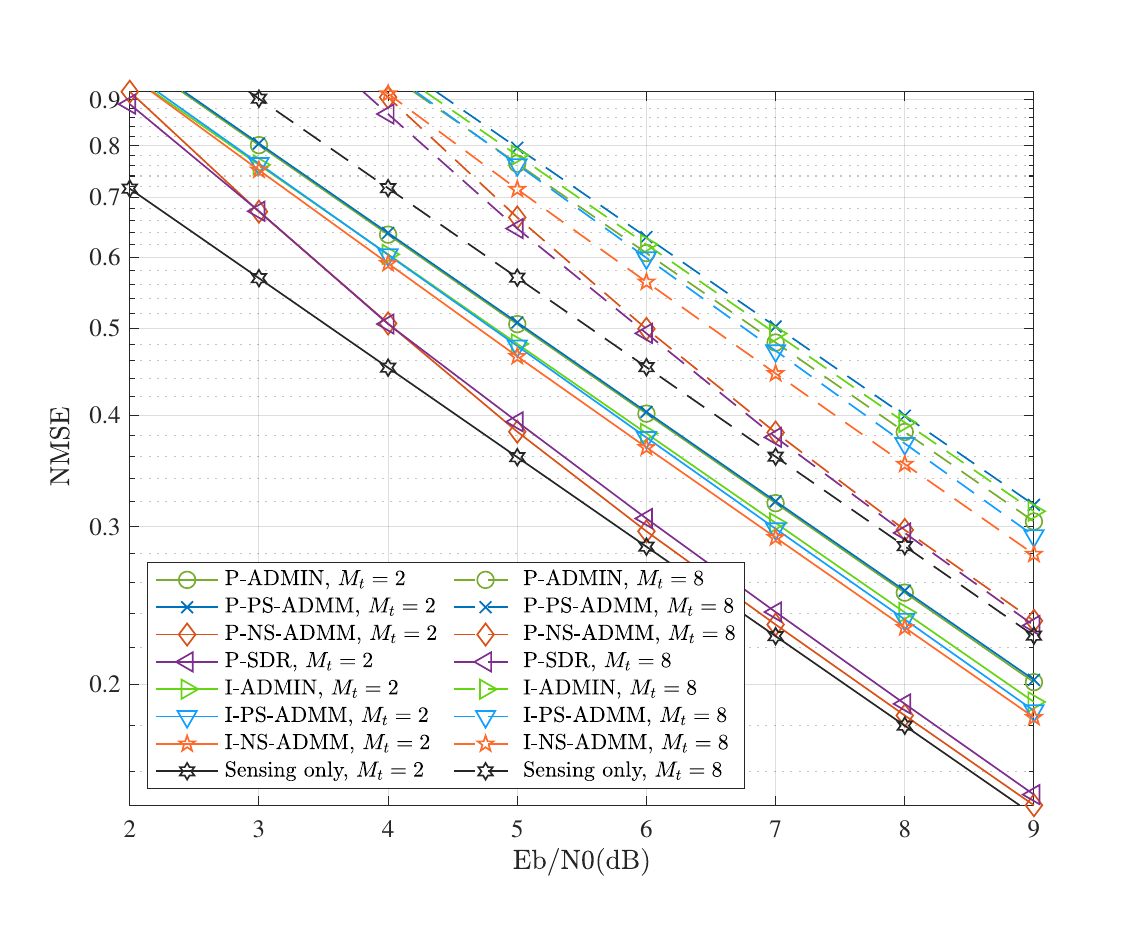}
	\vspace{-0.5cm}
	\caption{Comparisons of NMSE performance for 16-QAM in $32 \times 8 \times 2$ and $32 \times 8 \times 8$ MIMO-ISAC system.}
	\label{sen_128_32}
	\vspace{-0.3cm}
\end{figure}

Fig.~\ref{sen_64_60_2} extends the NMSE performance comparison to the $64 \times 60 \times 2$ MIMO-ISAC system.
It is clear that P-NS-ADMM and I-NS-ADMM demonstrate significantly different performance across low and high SNR regimes.
For P-NS-ADMM, substantial performance degradation is observed under low SNR conditions.
This is because P-NS-ADMM performs communication detection followed by the estimation of the real-valued sensing parameters.
However, NS projects the real values to discrete integers, which increases the NMSE due to the high BER at low SNR.
Since the BER decreases with increasing SNR, the sensing performance of P-NS-ADMM improves and approaches the sensing-only scheme. 
In comparison, within the I-NS-ADMM framework, the sensing parameters are output in each iteration before the execution of NS, thereby ensuring accurate estimation of the sensing parameters at low SNR.
Nevertheless, I-NS-ADMM is constrained by the mutual interference between S\&C signals, resulting in a discrepancy with the sensing-only scheme.

\begin{figure}[t]
	\vspace{-0.3cm}
	\centering
	\includegraphics[width=0.48\textwidth]{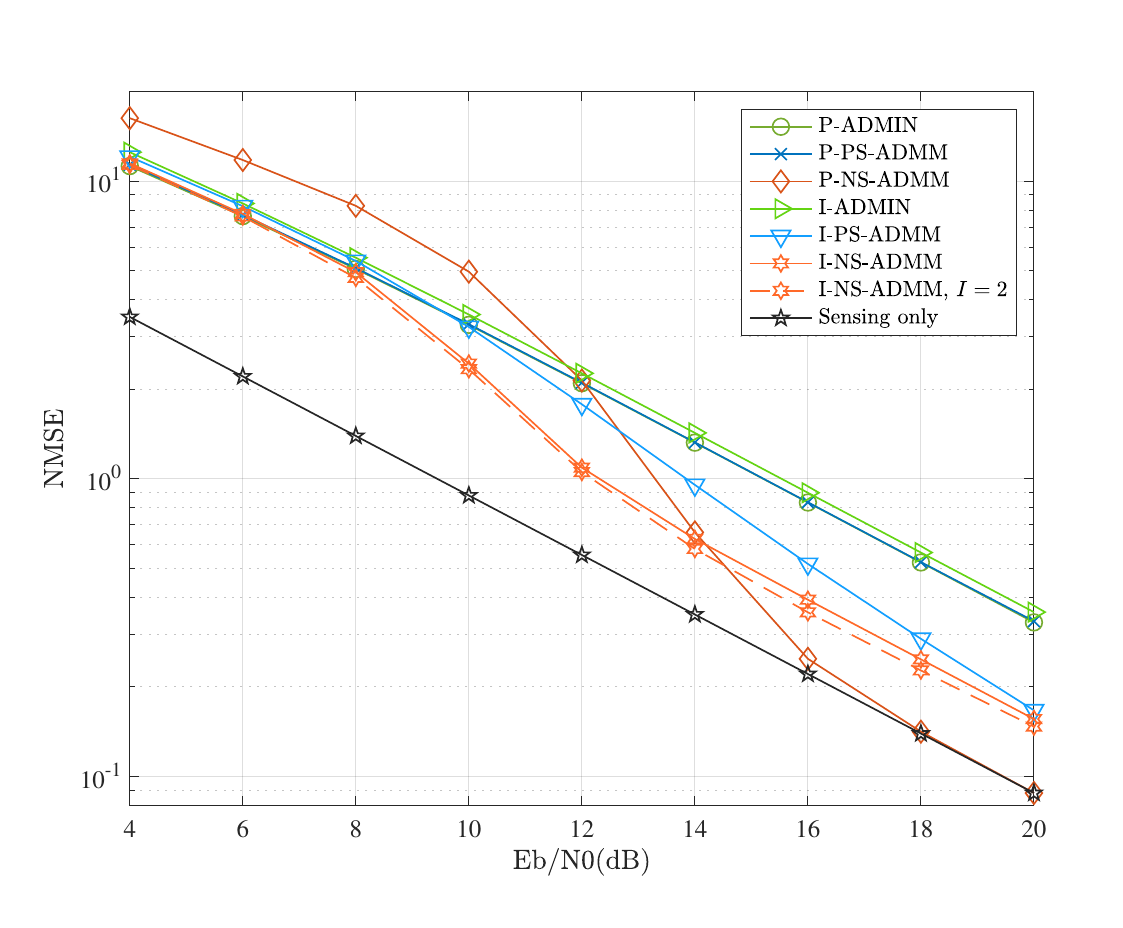}
	\vspace{-0.5cm}
	\caption{Comparisons of NMSE performance for 16-QAM in $64 \times 60 \times 2$ MIMO-ISAC system.}
	\label{sen_64_60_2}
	\vspace{-0.3cm}
\end{figure}

Given that we have obtained the expression of $\mathbf{x}_{\mathrm{NS}}$ in \eqref{xns_new}, subtracting $\mathbf{x}_{\mathrm{NS}}$ from \eqref{original}, we can obtain
\begin{align}
	\mathbf{y} - \mathbf{Hx}_{\mathrm{NS}} &= \mathbf{As} + \mathbf{H}(\mathbf{x} - \mathbf{x}_{\mathrm{NS}}) + \mathbf{n} \nonumber \\
	&= \mathbf{As} -\mathbf{H}(\tilde{\mathbf{D}}^{-1}\odot{\tilde{\mathbf{H}}^T\tilde{\mathbf{n}}}) + \mathbf{n}.
\end{align}
Therefore, the estimated sensing signal vector $\hat{\mathbf{s}}$ can be expressed as
\begin{align}
	\hat{\mathbf{s}} &= (\mathbf{A}^T \! \mathbf{A})^{-1}\mathbf{A}^T(\mathbf{y} \!-\! \mathbf{Hx}_{\mathrm{NS}}) \nonumber \\
	& = \mathbf{s} \!+\! (\mathbf{A}^T \!\mathbf{A})^{-1}\mathbf{A}^T \!\mathbf{n} \!-\! (\mathbf{A}^T \! \mathbf{A})^{-1}\!\mathbf{A}^T\mathbf{H}(\tilde{\mathbf{D}}^{-1}\! \odot{\tilde{\mathbf{H}}^T \! \tilde{\mathbf{n}}}),
	\label{senreference}
\end{align}
where the third term represents the effect of communication detection on sensing estimation.
It can be eliminated only if the communication and sensing channels are orthogonal to each other, namely $\mathbf{A}^T\mathbf{H}=\mathbf{0}$. 
Since $\mathbf{x}_{\mathrm{NS}}$ in \eqref{xns_new} achieves full received diversity order, \eqref{senreference} can be considered as a performance benchmark at high SNR.

Fig.~\ref{sen_64_60_2_pcps_noise_20_30} compares the NMSE performance for 16-QAM in $64 \times 60 \times 2$ MIMO-ISAC system with $E_b/N_0 = 20~\mathrm{dB}$ and $E_b/N_0 = 30~\mathrm{dB}$.
It can be noticed that the NMSE decreases with increasing sensing power.
In addition, the NMSE performance of I-PS-ADMM and I-NS-ADMM approaches the benchmark, while P-NS-ADMM is superior to the benchmark, which reflects the performance advantages of these three algorithms.
Meanwhile, at high SNR, P-NS-ADMM successfully decodes the communication signals, which makes its performance almost the same as the sensing only scheme.
Consequently, in a given scenario where the performance of I-NS-ADMM is found to be inferior to the benchmark, despite the complexity advantage of I-NS-ADMM, one can consider P-NS-ADMM.

\begin{figure}[t]
	\vspace{-0.3cm}
	\centering
	\includegraphics[width=0.48\textwidth]{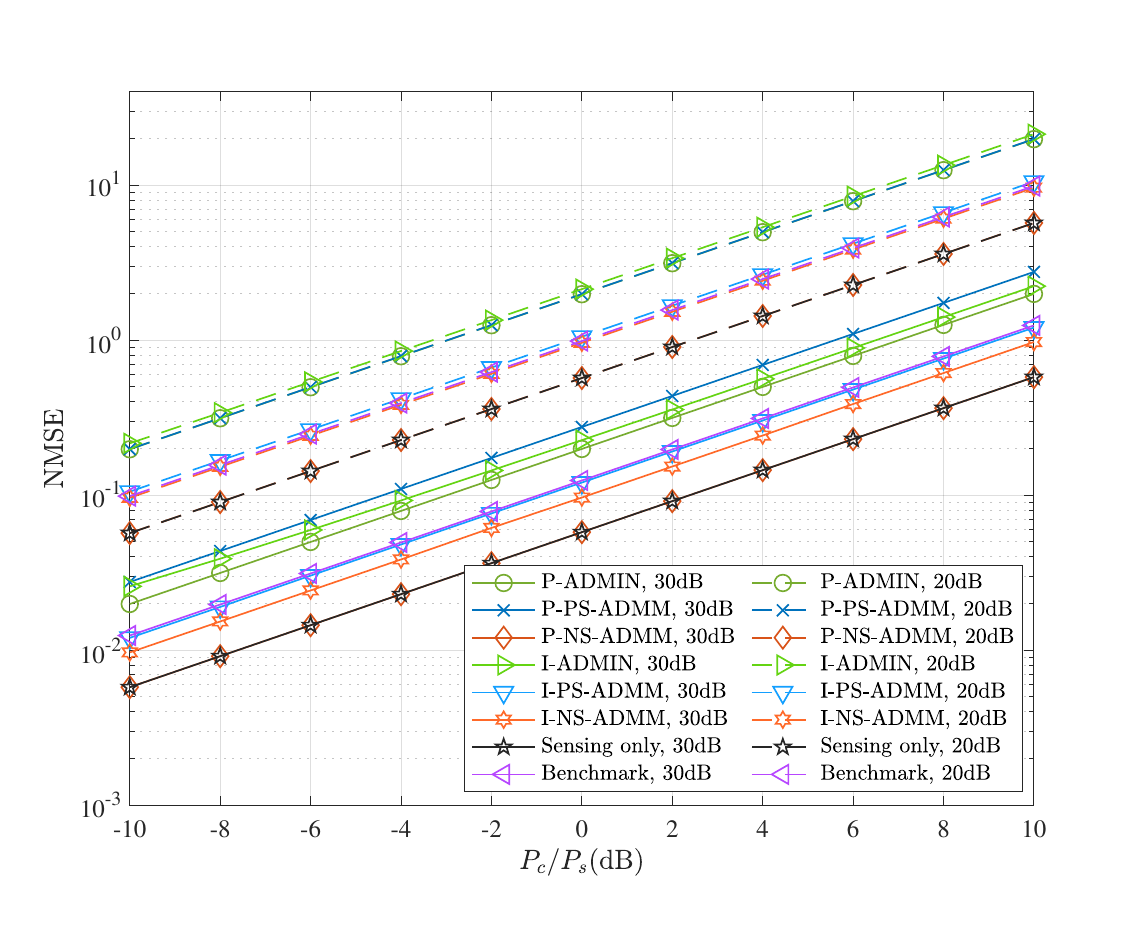}
	\vspace{-0.5cm}
	\caption{Comparisons of NMSE performance for 16-QAM in $64 \times 60 \times 2$ MIMO-ISAC system with $E_b/N_0 = 20~\mathrm{dB}$ and $E_b/N_0 = 30~\mathrm{dB}$.}
	\vspace{-0.3cm}
	\label{sen_64_60_2_pcps_noise_20_30}
\end{figure}

In summary, P-NS-ADMM and I-NS-ADMM each have distinct advantages and disadvantages, making them suitable for different scenarios:
\begin{itemize}
	\item The P-NS-ADMM algorithm demonstrates superior stability and performance in high SNR regimes as well as scenarios where $N_r$ approaches $M_t + U$.
	However, due to the projection operation, it has a higher complexity.
	\item The I-NS-ADMM algorithm offers reduced computational complexity and exhibits enhanced performance under low SNR conditions.
	Nevertheless, its effectiveness is limited in high SNR scenarios due to S\&C mutual interference.
\end{itemize}

\section{Conclusion}
In this paper, we modeled the detection problem as a MILS problem for large-scale MIMO-ISAC systems in multi-user and multi-target scenarios.
To mitigate mutual interference between S\&C signals, P-NS-ADMM was proposed, which utilizes the projection operation transforming the MILS problem into a traditional communication detection problem.
Then, we presented that the P-NS-ADMM algorithm achieves the same received diversity order as ML detection and matches the optimal performance with the increment of SNR.
In addition, to further reduce the complexity, the low-complexity I-NS-ADMM algorithm was proposed, which tackles the MILS problem directly.
Moreover, a flexible mechanism of ADMM iterations was also given for a better estimation of the sensing signals.
In simulations, we show that our proposal has remarkable advantages over other ADMM-based schemes in terms of BER and NMSE.
Meanwhile, we summarized the advantages and disadvantages of both schemes under various scenarios.
Therefore, for large-scale MIMO-ISAC systems, the proposed NS-aided ADMM detection schemes provide a realistic alternative that achieves significant performance.

\bibliographystyle{IEEEtran}
\bibliography{IEEEabrv,reference1}

\end{document}